\DeclareMathOperator*{\argmin}{\mathrm{argmin}}
\DeclareMathOperator*{\argmax}{\mathrm{argmax}}
\newcommand{\R}{\mathbb{R}}
\newcommand{\plsEstim}{\hat{\boldsymbol\beta}_{\mathrm{PLS}}^{\scriptscriptstyle {(L)}}}
\newcommand{\olsEstim}{\hat{\boldsymbol \beta}_{\mathrm{OLS}}}
\newcommand{\xcov}{{\boldsymbol \Sigma_{\text{\tiny XX}}}}
\newcommand{\xycov}{{\boldsymbol \Sigma_{\text{\tiny XY}}}}
\newcommand{\matW}{\mathbf W_L} 
\newcommand{\matP}{\mathbf P_L}
\newcommand{\matD}{\mathbf D_L} 
\newcommand{\matT}{\mathbf T_L}
\newcommand{\matR}{\mathbf R_L}
\begin{document}

\def\mytitle{Relation between PLS and OLS regression in terms of the eigenvalue distribution
  of the regressor covariance matrix }

\title{\mytitle}

\author[1]{David del Val}

\author[2]{José R. Berrendero}

\author[3]{Alberto Suárez}

\address[1]{Departamento de Matemáticas and Departamento de Ingeniería Informática, 
  Universidad Autónoma de Madrid}
\address[2]{Departamento de Matemáticas, Universidad Autónoma de Madrid,
  and Instituto de Ciencias Matemáticas ICMAT (CSIC-UAM-UCM-UC3M)}
\address[3]{Departamento de Ingeniería Informática, Escuela Politécnica Superior, %
  Universidad Autónoma de Madrid}

\corres{%
  \email{delvaldavid1@gmail.com} (D. del Val)
  \email{joser.berrendero@uam.es} (J.R. Berrendero) \linebreak
  \email{alberto.suarez@uam.es} (A. Suárez)
}

\abstract[Summary]{%
  Partial least squares (PLS) is a dimensionality reduction technique introduced
  in the field of chemometrics and successfully employed in many other areas.
  The PLS components are obtained by maximizing the covariance between linear combinations of the regressors and of the target variables.
  In this work, we focus on its application to scalar regression problems. 
  PLS regression consists in finding the least squares predictor that is a linear combination of a subset of the PLS components.
  Alternatively, PLS regression can be formulated as a least squares problem restricted to a Krylov subspace.  
  This equivalent formulation is employed to analyze the distance between $\plsEstim$, the PLS estimator of the vector of
  coefficients of the linear regression model based on $L$ PLS components, and
  $\olsEstim$, the one obtained by ordinary least squares (OLS), as a function of $L$.
  Specifically, $\plsEstim$ is the
  vector of coefficients in the aforementioned Krylov subspace that is closest to
  $\olsEstim$ in terms of the Mahalanobis distance with respect to the covariance matrix of
  the OLS estimate. 
  We provide a bound on this distance that depends only
  on the distribution of the eigenvalues of the regressor covariance matrix.
  Numerical examples on synthetic and real-world data are used
  to illustrate how the distance between $\plsEstim$ and $\olsEstim$ depends on
  the number of clusters in which the eigenvalues of the regressor covariance
  matrix are grouped.
}

\keywords{dimensionality reduction, linear regression, partial least squares,
  Krylov subspaces}

\jnlcitation{\cname{%
    \author{del Val D.}
    \author{Berrendero JR.}
    \author{Suárez A.}
  }(\cyear{2023}),
  \ctitle{%
    \mytitle
  }.}

\maketitle

\section{Introduction}\label{sec1}

  Partial least squares (PLS) is a family of dimensionality reduction methods
  introduced in the field of chemometrics \citep{woldNIPALS1977}, where it is
  extensively used \citep{frankChemometrics1993, woldPLSregression2001}. Its
  success in this discipline has led to its adoption in other scientific areas
  such as medicine \citep{worsleyOverview1997,nguyenTumorClassification2002},
  physiology \citep{lobaughSpatiotemporal2001}, and pharmacology
  \citep{nilssonMultiway1997}.
  In PLS, two blocks of random variables $X$ and $Y$ are considered. 
  The PLS components are built in a stepwise manner by maximizing the covariance between linear combinations of the components of $X$ and of $Y$ and imposing orthogonality to the previously identified components. 
  In variants of PLS used only for dimensionality reduction, $X$ and $Y$ are handled in a symmetric manner. 
  As a result, the maximum number of components is limited by
  the block that has the lowest dimension. 
  When PLS is employed for regression, non-symmetric variants are used to account for the distinct roles
  played by the two blocks: predictor ($X$) and response ($Y$) variables. 
  In contrast to the symmetric case, the number of components is limited only by the dimensionality of $X$, a feature that is of particular relevance in scalar regression problems.

  In this paper, we apply PLS to a random vector $X$ with $D$ components and
  a scalar random variable $Y$. 
  In this setting, PLS extracts a sequence of $L\le D$
  (random) components $\{t_l\}_{l=1}^L$, each of which is a linear combination of
  the coordinates of $X$. 
  The PLS components are orthogonal directions along
  which the covariance with the response variable, $Y$, is maximal. This
  optimization criteria is closely related to the ones used in other linear dimensionality
  reduction techniques such as principal component analysis (PCA) and canonical
  correlation analysis (CCA). 
  The CCA components are identified by maximizing the
  correlation with the response variable, instead of the covariance. In PCA, the
  principal components are defined solely in terms of the regressor variables:
  they are linear combinations of the coordinates of X, obtained sequentially by
  maximizing the variance in the space orthogonal to the one spanned by the
  previously identified components. The covariance of $t_l$, the $l$-th PLS
  component, and the response variable can be expressed in terms of the
  correlation and the corresponding variances:
  \begin{equation*}
    \label{eq:covariance}
    \mathrm{cov}(t_l, Y)^2 = \mathrm{var}(t_l) \mathrm{corr}(t_l,Y)^2 \mathrm{var}(Y).
  \end{equation*}
  This expression makes it clear that the PLS objective $(\mathrm{cov}(t_l,Y))$
  combines the optimization
  objective of PCA ($\mathrm{var}(t_l)$) and CCA ($\mathrm{corr}(t_l,Y)$).

  PLS was originally introduced in \cite{woldNIPALS1977}.
  In that work, the PLS components are defined computationally as the result of applying the NIPALS
  (non-linear iterative partial least squares) algorithm. 
  For scalar $Y$, the components identified by NIPALS are the
  solution of a constrained optimization problem.
  This problem consists in finding orthogonal
  linear combinations of the coordinates of $X$ that maximize the covariance with the response variable   \citep{dejongSIMPLS1993}. 
  Also in the case of scalar response, PLS regression with
  $L$ components can be formulated as a least squares problem restricted to
  the Krylov subspace of order $L$ generated by $\mathrm{cov}(X, X)$, the
  covariance matrix of X, and $\mathrm{cov}(X, Y)$, the vector of cross-covariances
  between X and Y \citep{hellandStructurePLS1988}. 
  As shown in
  Section \ref{sec:convergence}, $\plsEstim$ is the estimator in the Krylov
  subspace of order $L$ that is closest to $\olsEstim$, the ordinary least
  squares (OLS) estimator, in terms of the Mahalanobis distance with the covariance matrix of the OLS estimator. 
  Moreover, the PLS estimator is
  equal to the OLS one when $L$ is the number of distinct eigenvalues of
  $\mathrm{cov}(X,X)$. Given this equivalency, it is also possible to show that
  the conjugate gradient method \citep{woldPLSInverse1984} and the Lanczos
  bidiagonzalization algorithm \citep{eldenPLSLanczos2004} can be used as an
  alternative to NIPALS for PLS regression. 
  Another contribution of this work is
  to utilize the properties of the conjugate gradient method
  \citep{hestensCG1952, wrightNumerical1999} to quantify the differences between the  PLS and OLS estimators of the vector of regression coefficients, by establishing an equivalency between PLS and a  polynomial fitting problem. 
  From this reformulation of the problem it is possible to derive an upper bound for the distance between these estimators that depends only on the spectrum of $\mathrm{cov}(X,X)$. 
  In light of this analysis, we explore the relation between these estimators in terms of the characteristics of the distribution of eigenvalues. 
  In particular, if these eigenvalues are grouped into $k$
  tight clusters, PLS with $k$ components provides a good approximation to OLS.

  Finally, we carry out an empirical comparison of PCR (principal components
  regression) and PLS regression. This comparison shows that PLS and PCR are
  optimal for different types of eigenvalue distributions. 
  PLS performs best when
  the eigenvalues are clustered around a few values. 
  In contrast, PCA works best in the presence of a few dominant eigenvalues.

  The article is organized as follows: 
  In Section \ref{sec:dim_red}, PLS is introduced as a dimensionality reduction method.
  The NIPALS algorithm for PLS regression is also detailed, and its properties are analyzed. 
  The use of PLS in regression is described in Section
  \ref{sec:pls}. 
  Specifically, we provide a novel derivation of the equivalence
  between the standard formulation, as a least squares regression problem in the space spanned by the
  first PLS components, and one based on solving a least squares problem
  restricted to a Krylov space. 
  In Section \ref{sec:convergence}, the differences
  between the PLS estimator of the vector of regressor coefficients and the OLS one are quantified in terms of the Mahalanobis distance with the covariance matrix of the OLS estimator. 
  Proofs of some of these relations are given in the Appendix.
  In Section \ref{sec:examples}, the results of a numerical investigation of the performance of PLS are presented for different scenarios using synthetic and real-world regression problems. 
  Finally, the conclusions of this work are presented in Section \ref{sec:conclusions}.  

\section{Dimensionality reduction with partial least squares}
  \label{sec:dim_red}

  Consider the sample $\{(\mathbf x_i, y_i)\}_{i=1}^N$, whose $i$-th element is
  characterized by $\mathbf x\in \mathbb R^D$ and $\mathbf y_i\in \mathbb R$. The
  goal of PLS is to identify a set of components $\{\mathbf t_l\}_{l=1}^L$ that
  capture linear relations between X and Y. Originally, the PLS components were
  introduced as the output of the NIPALS algorithm, which will be described later
  in this section. In \cite{dejongSIMPLS1993}, it is shown that the $l$-th PLS
  component is the solution of the following optimization problem:
  \begin{equation}
    \label{eq:pls:optimization}
    \begin{alignedat}{4}
      \mathbf t_l =
       &  & \ \argmax_{\mathbf t} \quad \mathrm{cov}(\mathbf t, \mathbf y)
      \quad
      \mathrm{subject\ to} \quad
       &  &                                                                &
      \mathbf t = \mathbf X \mathbf r,\ \  \mathbf r \in\mathbb{R}^D, \ \
      \|\mathbf r\|=1;
      \\
       &  &                                                                &   
       &  & \mathbf t^\top \mathbf t_i=0 \quad i=1,\dots,\, l-1,
    \end{alignedat}
  \end{equation}
  where $\mathbf X=(\mathbf x_1,\dots \mathbf x_N)^\top\in \R^{N\times D}$ and
  $\mathbf y=(y_1,\dots y_N)^\top$.
  Associated with the components, the weight vectors $\{\mathbf r_l\}_{l=1}^L$
  are defined so that $\|\mathbf r_l\|=1$ and $\mathbf t_l = \mathbf X \mathbf
    r_l$, for $l=1,\dots, L$.

  The PLS components can be computed using different algorithms. The NIPALS
  algorithm, which we will describe in the remainder of this section, was the
  first one introduced and is still widely used today. This algorithm was first
  introduced in \cite{woldNIPALS1977}, and has been the object of successive
  refinements \citep{wegelinSurvey2000}. In this work we focus on the NIPALS
  algorithm for scalar PLS regression \citep{eldenPLSLanczos2004,
    rosipalAdvances2005}. NIPALS follows an iterative approach. At the end of each
  iteration, $\mathbf X$, the data matrix, is modified by removing the projection
  on the component computed in that iteration (line 7). As a result, a sequence
  of projections of the data matrix can be considered: $\{\mathbf X_l\}_{l=1}^L$.
  This deflation step ensures that subsequent components computed by the
  algorithm are orthogonal to the ones extracted up to that point. In particular,
  from lines 6 and 7, $\mathbf X_l = \Big( \mathbf I- \frac{\mathbf t_l \mathbf
      t_l^\top}{\mathbf t_l^\top \mathbf t_l} \Big) \mathbf X_{l-1}$, for
  $l=1,\dots L$.

  The $l$-th component can be computed as $\mathbf t_l= \mathbf X\mathbf r_l$,
  the projection of the original data onto the direction defined by the vector of
  weights $\mathbf r_l$. Alternatively, it is $\mathbf t_l = \mathbf X_{l-1}
  \mathbf w_l$, where $\mathbf w_l$ is the $l$-th weight vector extracted
  by NIPALS. Finally, the regressor and response loadings are defined as
	$
    \mathbf p_l = \mathbf X_{l-1}^\top \mathbf t_l /\|\mathbf t_l\|^2
	$
	and
	$
    q_l= \mathbf y^\top \mathbf t_l / \|\mathbf t_l\|^2,
	$
	for
	$
    l=1,\dots, L,
	$
  respectively.
  \begin{algorithm}[h!]
    \vspace{2pt}
    \hspace{0.025\linewidth}%
    \begin{minipage}[t]{0.07\linewidth}
      \textbf{Input}
    \end{minipage}
    \begin{minipage}[t]{0.3\linewidth}
      $\mathbf X$: the regressor variable data matrix. \\
      $\mathbf y$: the response variable data vector. \\
      $L$: the number of components to extract.
    \end{minipage}
    \hfill
    \begin{minipage}[t]{0.07\linewidth}
      \textbf{Output}
    \end{minipage}
    \begin{minipage}[t]{0.25\linewidth}
      $\{\textbf w_l\}_{l=1}^L$: projection weights. \\
      $\{\textbf t_l\}_{l=1}^L$: components.  \\
      $\{\textbf p_l\}_{l=1}^L$: loadings.
    \end{minipage}
    \vspace{5pt}
    \\
    \hrule
    \begin{algorithmic}[1]
      \State $\mathbf X_0 \gets \mathbf X$
      \State $l\gets1$
      \While {$l<L$}
      \State $\mathbf w_l \gets \mathbf X_{l-1}^\top \mathbf y
        /\|\mathbf X_{l-1}^\top\mathbf y\|$
      \Comment{\textit{Weights} calculation}%

      \State $\mathbf t_l \gets \mathbf X_{l-1} \mathbf w_l$
      \Comment{\textit{Scores} calculation}%

      \State $\mathbf p_l \gets \mathbf X_{l-1}^\top \mathbf t_l
        / (\mathbf t_l^\top \mathbf t_l)$
      \Comment{\textit{Loadings} calculation}

      \State $\mathbf X_l \gets \mathbf X_{l-1} - \mathbf t_l \mathbf p_l^\top$
      \Comment{Deflate $\mathbf X$}
      \State $l \gets l+1$
      \EndWhile
    \end{algorithmic}
    \caption{NIPALS for PLS regression with scalar response} \label{alg:nipals:pls1}
  \end{algorithm}
  \vspace*{-10pt}

  From this algorithm, we can derive a series of properties. 
  The ones relevant
  for the rest of this paper are included in the following propositions, whose
  proofs are given in the appendix.

  \begin{proposition}
    \label{prop:nipals_properties}
    From the NIPALS algorithm, the following properties can be derived:
    \begin{enumerate}
      \item In terms of the PLS components, the original data can be expressed as
            \begin{equation}
              \label{eq:nipals:approx}
              \mathbf X = \mathbf T_L \mathbf P_L^\top + \mathbf X_L,
              \qquad
              \mathbf y = \mathbf T_L \mathbf Q_L^\top + \mathbf y_L,
            \end{equation}
            where $\mathbf X_L\in \mathbb R^{N\times D}$ and $\mathbf y_L \in \mathbb R^N$
            are defined as
            \begin{equation}
              \label{eq:nipals:x_def}
              \mathbf X_L = \prod_{i=1}^{L} \left(
              \mathbf I - \frac{\mathbf t_i \mathbf t_i^\top}
              {\mathbf t_i^\top \mathbf t_i}
              \right) \mathbf X,
              \qquad
              \mathbf y_L = \prod_{i=1}^{L} \left(
              \mathbf I - \frac{\mathbf t_i \mathbf t_i^\top}
              {\mathbf t_i^\top \mathbf t_i}
              \right) \mathbf y.
            \end{equation}
            Additionally, $\mathbf T_L$, $\mathbf P_L$ and $\mathbf Q_L$ are defined as
            $\mathbf T_L = (\mathbf t_1, \dots \mathbf t_L)\in \mathbb R^{N\times L}$,
            $\mathbf P_L = (\mathbf p_1,\dots, \mathbf p_L) \in \mathbb{R}^{D\times L}$,
            and $\mathbf Q_L = (q_1,\dots, q_L)\in \mathbb{R}^{1\times L}$.

      \item The Frobenius norms of $\mathbf X_L$ and $\mathbf y_L$ decrease as $L$
            increases.

      \item After $L$ iterations, $\mathbf X_L$ is orthogonal to the weights: $\mathbf X_L
              \mathbf W_L = \mathbf 0$, where $\mathbf W_L=(\mathbf w_1,\dots \mathbf w_L)\in
              \mathbb R^{M\times L}$

      \item The loading matrices $\mathbf P_L$ and $\mathbf Q_L$ can be expressed in terms
            of the components and the original data as $\mathbf P_L = \mathbf X ^\top
              \mathbf T_L \mathbf D_L^{-2}$ and $\mathbf Q_L~=~\mathbf y^\top \mathbf T_L
              \mathbf D_L^{-2}$, with 
              $\mathbf D_L = \text{diag} \left( \|\mathbf{t}_1 \|, \ldots,  \|\mathbf{t}_L\| \right) \in \mathbb{R}^{L\times L}$.
    \end{enumerate}
  \end{proposition}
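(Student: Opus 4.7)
The plan is to first isolate two structural facts from which all four items follow almost mechanically: (a) the deflation step in line~7 can be rewritten as $\mathbf X_l = (\mathbf I - \mathbf t_l \mathbf t_l^\top/\mathbf t_l^\top \mathbf t_l)\, \mathbf X_{l-1}$, obtained by substituting the definition of $\mathbf p_l$ from line~6 and observing that $\mathbf t_l \mathbf p_l^\top = \mathbf t_l \mathbf t_l^\top \mathbf X_{l-1}/\|\mathbf t_l\|^2$; and (b) the PLS components are mutually orthogonal, $\mathbf t_i^\top \mathbf t_j=0$ for $i\neq j$, and more generally $\mathbf X_l^\top \mathbf t_i = \mathbf 0$ whenever $i\le l$. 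I would prove (b) by induction on $l$. The case $i=l$ is immediate from (a), since $\mathbf X_l^\top \mathbf t_l = \mathbf X_{l-1}^\top(\mathbf I - P_l)\mathbf t_l$ and $P_l \mathbf t_l=\mathbf t_l$. For the inductive step one uses $\mathbf t_{l+1}=\mathbf X_l \mathbf w_{l+1}$, so $\mathbf t_{l+1}^\top \mathbf t_i = \mathbf w_{l+1}^\top (\mathbf X_l^\top \mathbf t_i) = \mathbf 0$ for $i\le l$, and then a one-line computation extends $\mathbf X_{l+1}^\top \mathbf t_i=\mathbf 0$ to all $i\le l+1$.

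With (a) and (b) in hand, items 1 and 4 are essentially bookkeeping. Telescoping $\mathbf X_l = \mathbf X_{l-1} - \mathbf t_l \mathbf p_l^\top$ gives the additive decomposition $\mathbf X = \mathbf T_L \mathbf P_L^\top + \mathbf X_L$; the product representation of $\mathbf X_L$ then follows by iterating form~(a), the ordering of the rank-one projectors being immaterial since (b) forces them to commute. The identical argument with $q_l$ in place of $\mathbf p_l$, using that the deflation of $\mathbf y$ removes $\mathbf t_l q_l = \mathbf t_l \mathbf t_l^\top \mathbf y/\|\mathbf t_l\|^2$, handles the $\mathbf y$ decomposition. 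For item~4, the claim reduces to showing $\mathbf X^\top \mathbf t_l = \mathbf X_{l-1}^\top \mathbf t_l$; but item~1 gives $\mathbf X - \mathbf X_{l-1} = \sum_{i<l}\mathbf t_i \mathbf p_i^\top$, so the difference $\sum_{i<l}\mathbf p_i (\mathbf t_i^\top \mathbf t_l)$ vanishes by (b), and the $\mathbf Q_L$ identity is analogous.

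Item~2 is the easiest: form~(a) expresses $\mathbf X_l$ as the image of $\mathbf X_{l-1}$ under an orthogonal projector, so by the Pythagorean identity $\|\mathbf X_{l-1}\|_F^2 = \|\mathbf X_l\|_F^2 + \|P_l \mathbf X_{l-1}\|_F^2$, and similarly for $\mathbf y_l$. For item~3 I would argue in two stages. The base case $\mathbf X_l \mathbf w_l = \mathbf 0$ is a direct computation: $\mathbf p_l^\top \mathbf w_l = \mathbf t_l^\top \mathbf X_{l-1}\mathbf w_l/\|\mathbf t_l\|^2 = \mathbf t_l^\top \mathbf t_l/\|\mathbf t_l\|^2 = 1$, so $\mathbf X_l \mathbf w_l = \mathbf X_{l-1}\mathbf w_l - \mathbf t_l = \mathbf 0$. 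For $m>l$, assuming inductively that $\mathbf X_{m-1}\mathbf w_l = \mathbf 0$, the further deflation contributes only $\mathbf t_m \mathbf t_m^\top(\mathbf X_{m-1}\mathbf w_l)/\|\mathbf t_m\|^2 = \mathbf 0$, so the property propagates.

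The only step that requires real care is establishing orthogonality in fact~(b), because NIPALS does not impose orthogonality through an explicit Gram--Schmidt step but only implicitly through deflation; once the coupled induction on $\mathbf X_l^\top \mathbf t_i$ and $\mathbf t_j^\top \mathbf t_i$ is in place, every remaining step in all four items collapses to a one-line manipulation.
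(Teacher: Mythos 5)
Your proof is correct and follows essentially the same route as the paper's: rewrite the deflation step as multiplication by the orthogonal projector $\mathbf I - \mathbf t_l\mathbf t_l^\top/(\mathbf t_l^\top\mathbf t_l)$, telescope line~7 for item~1, exploit the projector structure for items~2 and~3, and reduce item~4 to $\mathbf X^\top\mathbf t_l=\mathbf X_{l-1}^\top\mathbf t_l$ via orthogonality of the components. The only substantive difference is that you establish the orthogonality facts $\mathbf t_i^\top\mathbf t_j=0$ and $\mathbf X_l^\top\mathbf t_i=\mathbf 0$ by an explicit coupled induction (and use the Pythagorean identity rather than the eigendecomposition of the projector for item~2), whereas the paper simply invokes the orthogonality of the extracted components; your version is the more self-contained one.
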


  NIPALS calculates both the components and the weights needed to express the
  components as projections of the deflated $\mathbf X_L$ data matrices. However,
  it is advantageous to express the components as a projection of the original
  data $\mathbf X$ to simplify the resulting expressions. 
  The following proposition, whose proof can be found in the appendix, provides an expression for these projection directions
  \begin{proposition}
    \label{prop:rotation}
    The matrix $\mathbf R_L\in \mathbb R^{D\times L}$ that fulfills
    $\mathbf T_L = \mathbf X \mathbf R_L$ is
    $ \mathbf R_L = \mathbf W_L (\mathbf P_L^\top \mathbf W_L)^{-1} $.
  \end{proposition}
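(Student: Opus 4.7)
The plan is to combine the two decomposition results from Proposition \ref{prop:nipals_properties} and then verify that the matrix $\mathbf{P}_L^\top \mathbf{W}_L$ is invertible, so that it can be formally inverted on the appropriate side.

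First I would start from the decomposition $\mathbf{X} = \mathbf{T}_L \mathbf{P}_L^\top + \mathbf{X}_L$ given in part~1 of Proposition~\ref{prop:nipals_properties}. Multiplying on the right by $\mathbf{W}_L$ yields
\begin{equation*}
  \mathbf{X}\mathbf{W}_L \;=\; \mathbf{T}_L\,\mathbf{P}_L^\top \mathbf{W}_L \;+\; \mathbf{X}_L \mathbf{W}_L,
\end{equation*}
and the second term vanishes by part~3 of the same proposition (orthogonality of the deflated $\mathbf{X}_L$ to every weight vector). Hence $\mathbf{X}\mathbf{W}_L = \mathbf{T}_L\,\mathbf{P}_L^\top \mathbf{W}_L$, so once $\mathbf{P}_L^\top \mathbf{W}_L$ is shown to be invertible, the required identity $\mathbf{T}_L = \mathbf{X}\mathbf{W}_L (\mathbf{P}_L^\top \mathbf{W}_L)^{-1}$ follows immediately, and uniqueness of $\mathbf{R}_L$ (given that $\mathbf{T}_L$ is known to lie in the column span of $\mathbf{X}$ via $\mathbf{t}_l = \mathbf{X}\mathbf{r}_l$) identifies this expression as $\mathbf{R}_L$.

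The only real work is the invertibility step, and I would handle it by showing that $\mathbf{P}_L^\top \mathbf{W}_L$ is upper triangular with unit diagonal. For the diagonal entries, I would use the NIPALS definitions to compute
\begin{equation*}
  \mathbf{p}_l^\top \mathbf{w}_l \;=\; \frac{\mathbf{t}_l^\top \mathbf{X}_{l-1} \mathbf{w}_l}{\mathbf{t}_l^\top \mathbf{t}_l}\;=\;\frac{\mathbf{t}_l^\top \mathbf{t}_l}{\mathbf{t}_l^\top \mathbf{t}_l}\;=\;1,
\end{equation*}
using $\mathbf{t}_l = \mathbf{X}_{l-1}\mathbf{w}_l$. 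For the strictly lower entries $\mathbf{p}_i^\top \mathbf{w}_j$ with $i>j$, I would write $\mathbf{p}_i^\top \mathbf{w}_j = \mathbf{t}_i^\top \mathbf{X}_{i-1}\mathbf{w}_j/\|\mathbf{t}_i\|^2$ and invoke the auxiliary fact (easily derived by induction from the deflation update together with $\mathbf{p}_l^\top \mathbf{w}_l = 1$) that $\mathbf{X}_m \mathbf{w}_j = \mathbf{0}$ for every $m \ge j$. This gives a zero numerator whenever $i-1 \ge j$, i.e., exactly in the strictly lower triangle.

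The main obstacle is really just being careful with the order of indices in that last induction: one must first establish $\mathbf{X}_j \mathbf{w}_j = \mathbf{0}$ from the deflation recurrence, and then propagate it to $\mathbf{X}_m \mathbf{w}_j = \mathbf{0}$ for $m>j$ by applying the projector $(\mathbf{I} - \mathbf{t}_m \mathbf{t}_m^\top/\|\mathbf{t}_m\|^2)$. Everything else is direct substitution of the definitions from Algorithm~\ref{alg:nipals:pls1} together with the already-proved Proposition~\ref{prop:nipals_properties}, so no new ideas beyond these are required.
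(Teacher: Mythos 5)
Your proof is correct and follows essentially the same route as the paper's: combine the decomposition $\mathbf X = \mathbf T_L \mathbf P_L^\top + \mathbf X_L$ with the orthogonality $\mathbf X_L \mathbf W_L = \mathbf 0$ from Proposition \ref{prop:nipals_properties} (the paper verifies $\mathbf X \mathbf R_L = \mathbf T_L$ by direct substitution, whereas you multiply by $\mathbf W_L$ and then invert, which is the same algebra read in the other direction). Your additional argument that $\mathbf P_L^\top \mathbf W_L$ is unit upper triangular, and hence invertible, is sound and actually closes a gap the paper leaves implicit, since its proof silently assumes that this inverse exists.
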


\section{Partial least squares regression with scalar response}
  \label{sec:pls}

  Consider the linear regression model $Y = \boldsymbol{\beta}^\top X\, +\,
    \epsilon$, where $X$ is the regressor vector with $D$ components,
  $\boldsymbol{\beta}$ is the vector of coefficients, which needs to be
  estimated, $\epsilon$ is random noise independent of $X$, and $Y$ is the scalar
  response. For the sake of simplicity, and without loss of generality, both $X$
  and $Y$ are assumed to have zero mean. To fit this model, $N$ independent
  observations drawn from this model are available: $\{(\mathbf{x}_i,
    y_i)\}_{i=1}^N$. We further assume that $\{\epsilon_i\}_{i=1}^N$ are iid
  with variance $\sigma^2$. In this setting, we seek to estimate a vector of
  coefficients $\boldsymbol \beta$ such that $ y_i = \boldsymbol\beta^\top
    \mathbf x_i + \epsilon_i$, where $i=1,\dots, N$. These equations can be grouped
  row-wise into the matrix equation
  \begin{equation}
    \label{eq:sample_reg}
    \mathbf y = \mathbf X \boldsymbol \beta + \boldsymbol \epsilon,
  \end{equation}
  where $\mathbf y=(y_1,\dots, y_N)^\top\in \mathbb R^{N}$,
  $\mathbf X = (\mathbf x_1, \dots, \mathbf x_N)^\top \in \mathbb R^{N\times D}$
  and $\boldsymbol \epsilon = (\epsilon_1,\dots, \epsilon_N)^\top \in \mathbb R^N$.

  One possible estimator for ${\boldsymbol \beta}$ is the ordinary least squares
  estimator (OLS), given by
  \begin{equation}
    \label{eq:ols_def}
    \hat{\boldsymbol \beta}_{\mathrm{OLS}} =
    \argmin_{\boldsymbol \beta\in \mathbb R^D}
    \|\mathbf y - \mathbf X \boldsymbol \beta \|^2 =
    (\mathbf X^\top \mathbf X)^{-1} \mathbf X^\top \mathbf y =
    \xcov^{-1} \xycov,
  \end{equation}
  where $\|\cdot\|$ is the euclidean norm,
  $\xcov= \mathbf X^\top \mathbf X$ is the empirical estimate of the covariance matrix
  of $X$ scaled by the number of observations,
  and
  $\xycov = \mathbf X^\top \mathbf y$
  is the empirical estimate of the covariance matrix
  of $X$ and $Y$ scaled by the number of observations as well.
 
  A different estimator of $\boldsymbol \beta$ is obtained using PLS regression.
  The first step is to extract $L$ PLS components as described in the
  previous section. 
  Then, a linear prediction is made in terms of these components:  
  $
      \sum_{l=1}^L \hat{\gamma_l}^{(L)} \mathbf t_l = \mathbf{T}_L \hat{\boldsymbol{\gamma}}^{(L)},
  $
  with 
  $\hat{\boldsymbol{\gamma}}^{(L)} = \left( \hat{\gamma}_1^{(L)}, \ldots, \hat{\gamma}_L^{(L)} \right)^{\top}$ 
  determined by least squares as
  \begin{equation}
  \hat{\boldsymbol \gamma}^{(L)}
  =  \argmin_{\boldsymbol \gamma \in \mathbb R^L} 
      \| \mathbf y - \mathbf T_L  \boldsymbol{\gamma} \|^2 
  =  (\mathbf T_L^\top \mathbf T_L)^{-1}\mathbf T_L^\top \mathbf y
  =  \mathbf D_L^{-2}\mathbf T_L^\top\mathbf y.
  \end{equation}
  The PLS estimator of $\boldsymbol\beta$ is obtained by expressing this linear predictor in terms of the original variables 
  $ \mathbf T_L \hat{\boldsymbol{\gamma}}^{(L)} 
    = \mathbf X \hat{\boldsymbol \beta}_{\mathrm{PLS}}^{(L)} 
  $.
  Using the definition of $\mathbf{R}_L$ from Proposition \ref{prop:rotation}:  
  $  
  \mathbf{X} \hat{\boldsymbol \beta}_{\mathrm{PLS}}^{(L)} = \mathbf{T}_L \hat{\boldsymbol{\gamma}}^{(L)} 
  =  \mathbf{X} \mathbf{R_L} \hat{\boldsymbol{\gamma}}^{(L)}.
  $
  Therefore, the PLS estimator of the vector of regression coefficients is
  \begin{equation}
    \label{eq:pls1_beta}
    \plsEstim = \mathbf R_L \hat{\boldsymbol \gamma}^{(L)}
    =
    \mathbf R_L \mathbf D_L^{-2}\mathbf T_L^\top \mathbf y.
  \end{equation}
  This estimator is then used to yield the linear prediction $\mathbf{X} \plsEstim$, in terms of the original variables.

  Alternatively, $\plsEstim$ can be viewed as the least squares estimator of $\boldsymbol\beta$ when the optimization is constrained to a Krylov subspace. 
  Krylov subspaces are defined as follows:
  \begin{definition}
    \label{def:kry}
    The Krylov subspace of order
    $L \le D$ generated by the matrix  
    $\mathbf A\in \mathbb{R}^{D\times D}$ and the vector
    $\mathbf b\in \mathbb{R}^D$, $\mathbf b\ne 0$ is
    \begin{equation}
      \label{eq:kry:def}
      \mathcal K_L(\mathbf A, \mathbf b)= \mathrm{span}\{
      \mathbf b,
      \mathbf A\mathbf b,
      \dots,
      \mathbf A^{L-1}\mathbf b
      \}.
    \end{equation}
  \end{definition}

  \begin{theorem}
    \label{thm:equiv}
    The PLS estimator with $L$ components defined in \eqref{eq:pls1_beta} is the
    solution to the least squares problem
    \begin{equation}
      \label{eq:pls_ols}
      \plsEstim
      =
      \argmin_{\boldsymbol \beta\in \mathcal K_L(\xcov ,\xycov)}
      \|\mathbf y - \mathbf X\boldsymbol\beta\|^2,
    \end{equation}
    where $\mathcal K_L(\xcov ,\xycov)$ is the Krylov
    subspace of order $L$ generated by the matrix $\xcov$ and the vector $\xycov$.
  \end{theorem}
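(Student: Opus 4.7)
My plan is to reduce the Krylov-constrained least squares problem to the ordinary least squares problem over coefficient vectors $\boldsymbol\gamma\in\mathbb R^L$ that defines $\hat{\boldsymbol\gamma}^{(L)}$. This works cleanly once I establish that the columns of $\matR$ span $\mathcal K_L(\xcov,\xycov)$; everything else then falls out of Proposition \ref{prop:rotation} and the defining least squares problem for $\hat{\boldsymbol\gamma}^{(L)}$.

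Concretely, I would first prove by induction on $l=1,\ldots,L$ that $\mathrm{span}(\mathbf w_1,\ldots,\mathbf w_l)=\mathcal K_l(\xcov,\xycov)$. The base case is immediate: $\mathbf w_1\propto \mathbf X^\top\mathbf y=\xycov$. For the inductive step, note that by expanding the deflation recursion one obtains $\mathbf X_l=\mathbf X-\sum_{i=1}^{l}\mathbf t_i\mathbf p_i^\top$, so
\begin{equation*}
\mathbf X_l^\top\mathbf y \;=\; \xycov \;-\;\sum_{i=1}^{l}(\mathbf t_i^\top\mathbf y)\,\mathbf p_i .
\end{equation*}
Using item~4 of Proposition~\ref{prop:nipals_properties}, $\mathbf p_i=\xcov\mathbf r_i/\|\mathbf t_i\|^2$, so $\mathbf p_i\in\xcov\,\mathrm{span}(\mathbf r_1,\ldots,\mathbf r_i)$; combining this with the inductive hypothesis and with Proposition~\ref{prop:rotation} (which guarantees $\mathrm{span}(\mathbf r_1,\ldots,\mathbf r_l)=\mathrm{span}(\mathbf w_1,\ldots,\mathbf w_l)$) gives $\mathbf p_i\in\mathcal K_{l+1}(\xcov,\xycov)$. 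Therefore $\mathbf w_{l+1}\propto\mathbf X_l^\top\mathbf y$ lies in $\mathcal K_{l+1}(\xcov,\xycov)$, and the span inclusion is proved. The reverse inclusion follows by a dimension count: the weights $\mathbf w_l$ are pairwise orthogonal (a standard NIPALS identity, also derivable from item~3 of Proposition~\ref{prop:nipals_properties}), hence linearly independent, so $\mathrm{span}(\mathbf w_1,\ldots,\mathbf w_l)$ has dimension $l$ and must equal the at-most-$l$-dimensional Krylov subspace.

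With the span identity in hand, Proposition~\ref{prop:rotation} implies $\mathrm{span}(\matR)=\mathcal K_L(\xcov,\xycov)$. Any $\boldsymbol\beta\in\mathcal K_L(\xcov,\xycov)$ can then be written uniquely as $\boldsymbol\beta=\matR\boldsymbol\alpha$ for some $\boldsymbol\alpha\in\mathbb R^L$, and
\begin{equation*}
\|\mathbf y-\mathbf X\boldsymbol\beta\|^2 \;=\; \|\mathbf y-\mathbf X\matR\boldsymbol\alpha\|^2 \;=\; \|\mathbf y-\matT\boldsymbol\alpha\|^2,
\end{equation*}
using $\matT=\mathbf X\matR$. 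Since $\matT$ has full column rank (its columns are nonzero and pairwise orthogonal), the unique minimizer is $\boldsymbol\alpha^\star=(\matT^\top\matT)^{-1}\matT^\top\mathbf y=\hat{\boldsymbol\gamma}^{(L)}$, and therefore the minimizer of \eqref{eq:pls_ols} is $\matR\hat{\boldsymbol\gamma}^{(L)}=\plsEstim$, as claimed.

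The main obstacle is the inductive step identifying $\mathrm{span}(\mathbf w_1,\ldots,\mathbf w_l)$ with the Krylov subspace: the weights are defined through the deflated $\mathbf X_{l-1}$, so the algebraic identity relating the loadings $\mathbf p_i$ to $\xcov\mathbf r_i$ (item~4 of Proposition~\ref{prop:nipals_properties}) is doing the real work, together with the slightly delicate interleaving of the three bases $\{\mathbf w_i\}$, $\{\mathbf r_i\}$, $\{\mathbf p_i\}$. Once that identification is in place, the rest of the argument is a routine change of variables in the least squares problem.
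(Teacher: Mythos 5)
Your proof is correct, but it follows a genuinely different route from the paper's in both of its halves. For the key fact that the NIPALS weights span the Krylov subspace, the paper simply cites \cite{eldenPLSLanczos2004}, whereas you prove it from scratch by induction, using the deflation identity $\mathbf X_l^\top \mathbf y = \xycov - \sum_{i\le l}(\mathbf t_i^\top\mathbf y)\,\mathbf p_i$, the loading identity $\mathbf p_i = \xcov \mathbf r_i/\|\mathbf t_i\|^2$, and the orthogonality of the weights for the dimension count; this makes the argument self-contained, at the cost of some bookkeeping among the three families $\{\mathbf w_i\},\{\mathbf r_i\},\{\mathbf p_i\}$ (all of which checks out: in particular $\mathbf w_{l+1}^\top\mathbf w_i \propto \mathbf y^\top \mathbf X_l\mathbf w_i = 0$ does follow from item~3 of Proposition~\ref{prop:nipals_properties}). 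For the reduction itself, the paper parametrizes the Krylov subspace by $\matW$, obtains the generic restricted least squares formula $\matW(\matW^\top\xcov\matW)^{-1}\matW^\top\xycov$, and then needs a chain of matrix identities to show that $\plsEstim$ from \eqref{eq:pls1_beta} matches it; you instead parametrize by $\matR$, so that $\mathbf X\matR\boldsymbol\alpha = \matT\boldsymbol\alpha$ turns the constrained problem directly into the defining least squares problem for $\hat{\boldsymbol\gamma}^{(L)}$, and the conclusion $\matR\hat{\boldsymbol\gamma}^{(L)} = \plsEstim$ is immediate. Your change of basis buys a shorter and more transparent endgame; the paper's choice of $\matW$ buys a formula in terms of an orthonormal basis and outsources the Krylov-span fact to the literature. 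Both arguments share the same implicit genericity assumptions (non-degeneracy of the first $L$ NIPALS steps, invertibility of $\matP^\top\matW$), so neither is more general than the other.
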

  \begin{proof}
    Assume that the columns of $\mathbf B_L\in \R^{D\times L}$ constitute
    a basis of the Krylov subspace $\mathcal K_L(\xcov,\xycov)$.
    Then any $\boldsymbol\beta\in \mathcal K_L(\xcov,\xycov)$ can be expressed as
    $\boldsymbol\beta = \mathbf B_L \boldsymbol \alpha$ for some
    $\boldsymbol \alpha\in \mathbb R^L$. Thus, the constrained
    optimization problem given by \eqref{eq:pls_ols} can be transformed into
    an unconstrained optimization problem in $\mathbb R^L$:
    \begin{equation}
      \label{eq:kry:opt_formula}
      \argmin_{\boldsymbol{\beta}\in \mathcal K_L(\xcov, \xycov)}
      \|\mathbf y - \mathbf X \boldsymbol \beta \| ^2
      =
      \argmin_{\boldsymbol{\alpha}\in \R^L}
      \|\mathbf y - \mathbf X \mathbf B_L \boldsymbol \alpha \| ^2
      = \mathbf B_L (\mathbf B_L^\top \mathbf X^\top \mathbf X \mathbf B_L)^{-1}
      \mathbf B_L^\top \mathbf X^\top
      \mathbf y.
    \end{equation}

    As shown in \cite{eldenPLSLanczos2004}, the columns of the matrix
    $\mathbf W_L$ obtained after $L$ iterations of NIPALS constitute a basis of
    $\mathcal K_L(\xcov, \xycov)$.
    Therefore,  \eqref{eq:kry:opt_formula} holds for $\mathbf B_L = \mathbf W_L$.
    It is then possible to show that $\plsEstim$ can be expressed in the form given by the rhs of  \eqref{eq:kry:opt_formula} 
    with $\mathbf B_L = \mathbf W_L$.
    To this end, Propositions \ref{prop:nipals_properties} and \ref{prop:rotation} are applied repeatedly to \eqref{eq:pls1_beta}:
    
    \begin{equation*}
      \begin{alignedat}{1}
        \plsEstim
         & =
        \matR \matD^{-2}\matT^\top \mathbf y
        =
        \matW(\matP^\top \matW^{-1})\matD^{-2}\matR^\top \mathbf X^\top \mathbf y
        =
        \matW(\matP^\top \matW)^{-1}\matD^{-2}(\matW (\matP^\top \matW)^{-1})^\top\mathbf X^\top \mathbf y
        =    \\     & =
              \matW(\matW^\top \matP \matD^{2}\matP^\top \matW)^{-1}\matW^\top\mathbf X^\top \mathbf y
              =
              \matW(\matW^\top \mathbf X^\top \matT\matP^\top \matW)^{-1}
        \matW^\top\mathbf X^\top \mathbf y
		= \\ &=
              \matW(\matW^\top \mathbf X^\top (\mathbf X - \mathbf X_L) \matW)^{-1}
        \matW^\top\mathbf X^\top \mathbf y
        =
              \matW(\matW^\top \mathbf X^\top \mathbf X \matW)^{-1}
        \matW^\top\mathbf X^\top \mathbf y,
      \end{alignedat}
    \end{equation*}
    where the last step holds because of the orthogonality between $\mathbf X_L$ and
    $\mathbf W_L$ (Proposition \ref{prop:nipals_properties}).
  \end{proof}

  Other approaches can be adopted to prove this theorem. In
  \cite{eldenPLSLanczos2004}, the proof is based on the relation of PLS with the
  Lanczos bidiagonalization algorithm. An alternative derivation is given in
  \cite{takanePLSAlgorithm2016}, leveraging the properties of some bidiagonal and
  tridiagonal matrices in the NIPALS algorithm \citep{woldNIPALS1977}. However,
  in this proof, only simpler relationship between the matrices that are defined
  in the NIPALS algorithm are needed.

  The expression of the vector of PLS regression coefficients given by
  \eqref{eq:pls_ols} opens up the possibility of using numerical optimization
  algorithms that accept linear constraints to compute $\plsEstim$. It suffices
  to minimize $\|\mathbf y-\mathbf X \boldsymbol \beta \|^2$ subject to
  $\boldsymbol \beta$ belonging to $\mathcal K_L(\xcov,\xycov)$. In particular,
  the conjugate gradient algorithm is an iterative algorithm that minimizes a
  quadratic form $\psi(z)= \mathbf z^\top \mathbf A \mathbf z - \mathbf b^\top
    \mathbf z$ while exploring $\mathcal K_L(\mathbf A, \mathbf b)$ in the $L$-th
  iteration \citep{wrightNumerical1999}. 
  Thus, the optimization problem in Theorem \ref{thm:equiv} can be solved using  the conjugate gradient algorithm with $\mathbf A = \xcov$ and $\mathbf b=\xycov$. 
  In the next section, we take advantage of this observation to study how the PLS estimator approximates the OLS one.

  Additionally, the following theorem establishes an important link between the
  OLS and the PLS estimators.

  \begin{theorem}
    \label{thm:ols}
    The OLS estimator is contained in $\mathcal K_M(\xcov, \xycov)$, where $M$ is the number
    of distinct eigenvalues of $\xcov$.
  \end{theorem}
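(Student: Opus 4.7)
The plan is to express the OLS estimator as a polynomial in $\xcov$ applied to $\xycov$, where the polynomial has degree at most $M-1$. Since $\mathcal K_M(\xcov,\xycov)$ consists precisely of vectors of the form $q(\xcov)\xycov$ with $\deg q \le M-1$, this will immediately yield the conclusion.

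The key algebraic input is the minimal polynomial of $\xcov$. Because $\xcov = \mathbf X^\top \mathbf X$ is symmetric and positive semidefinite, it is diagonalizable, so its minimal polynomial is $p(\lambda)=\prod_{i=1}^{M}(\lambda-\lambda_i)$, where $\lambda_1,\dots,\lambda_M$ are the distinct eigenvalues of $\xcov$. Note that $\olsEstim = \xcov^{-1}\xycov$ is only well defined when $\xcov$ is invertible, hence all $\lambda_i$ are nonzero; equivalently, the constant term of $p$ is $c_0=\prod_{i=1}^{M}(-\lambda_i)\ne 0$.

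The main step is then to rearrange the identity $p(\xcov)=\mathbf 0$, i.e.
\begin{equation*}
\xcov^{M}+c_{M-1}\xcov^{M-1}+\dots+c_1\xcov+c_0\mathbf I=\mathbf 0,
\end{equation*}
to isolate $\mathbf I$ and solve for $\xcov^{-1}$. Specifically, multiplying by $-\xcov^{-1}/c_0$ expresses $\xcov^{-1}$ as a polynomial of degree $M-1$ in $\xcov$. Applying this polynomial to $\xycov$ yields
\begin{equation*}
\olsEstim=\xcov^{-1}\xycov=\sum_{j=0}^{M-1}a_j\,\xcov^{j}\xycov
\end{equation*}
for suitable scalars $a_j$, which is by definition an element of $\mathcal K_M(\xcov,\xycov)$.

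I do not anticipate a major obstacle here; the only subtle point to articulate carefully is why the minimal polynomial has degree exactly $M$ (diagonalizability of the symmetric matrix $\xcov$) and why its constant term is nonzero (invertibility of $\xcov$, which is implicit in the existence of $\olsEstim$). Once these are noted, the proof reduces to the standard polynomial-inverse argument described above.
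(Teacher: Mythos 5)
Your proof is correct and follows essentially the same route as the paper: both express $\xcov^{-1}$ as a polynomial of degree $M-1$ in $\xcov$ and apply it to $\xycov$. If anything, your explicit appeal to the minimal polynomial (degree exactly $M$ by diagonalizability, nonzero constant term by invertibility) is the careful justification of the step the paper attributes somewhat loosely to the Cayley--Hamilton theorem, which on its own would only give a polynomial of degree $D-1$.
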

  \begin{proof}
    As a consequence of the Cayley-Hamilton theorem \citep[e.g.][p.220]{bronsonMatrixCalculus2009},
    since  $\xcov$ is a
    non-singular symmetric matrix, there exists a polynomial $P_{\xcov}$
    of degree $M-1$ such that $ P_{\xcov}(\xcov ) \xcov = \mathbf I$,
    where $M$ is the number of different eigenvalues of $\xcov$. Applying
    this result to the usual formula of OLS, we obtain
    $\hat{\boldsymbol \beta}_{\mathrm{OLS}}
      =
      (\mathbf X^\top \mathbf X)^{-1}\mathbf X^\top \mathbf y
      =
      P_{\xcov}(\xcov)\xycov
      \in \mathcal{K}_{M}(\xcov,\xycov)$.
    \qedhere
  \end{proof}
  \begin{corollary}
    The PLS estimator coincides with the OLS estimator
    after $M$ iterations, where $M$ is the number of
    different eigenvalues of $\xcov$:
    \begin{equation}
      \hat{\boldsymbol\beta}_{\mathrm{PLS}}^{(M)} =
      \hat{\boldsymbol\beta}_{\mathrm{OLS}}.
    \end{equation}
  \end{corollary}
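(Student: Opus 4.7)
The plan is to combine Theorems \ref{thm:equiv} and \ref{thm:ols} directly, with no new computation required. By Theorem \ref{thm:equiv} applied with $L = M$, the PLS estimator $\hat{\boldsymbol\beta}_{\mathrm{PLS}}^{(M)}$ is characterized as the unique minimizer of $\|\mathbf y - \mathbf X \boldsymbol\beta\|^2$ over $\boldsymbol\beta \in \mathcal{K}_M(\xcov, \xycov)$. By Theorem \ref{thm:ols}, the OLS estimator $\olsEstim$ itself lies in $\mathcal{K}_M(\xcov, \xycov)$, so it is a feasible candidate for exactly the same constrained least squares problem.

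Next, I would invoke the definition of $\olsEstim$ in \eqref{eq:ols_def}: it is the unconstrained minimizer of $\|\mathbf y - \mathbf X \boldsymbol\beta\|^2$ over all of $\mathbb R^D$, with uniqueness granted by the invertibility of $\xcov$ assumed throughout. Since $\mathcal{K}_M(\xcov, \xycov) \subseteq \mathbb R^D$ and this global optimum happens to fall inside the Krylov subspace, it must also be the constrained optimum over $\mathcal{K}_M(\xcov, \xycov)$. In other words, $\olsEstim$ is a minimizer of the same constrained problem that $\hat{\boldsymbol\beta}_{\mathrm{PLS}}^{(M)}$ solves.

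Uniqueness of that constrained minimizer, which follows from the invertibility of $\mathbf B_L^\top \mathbf X^\top \mathbf X \mathbf B_L$ appearing in \eqref{eq:kry:opt_formula} for any basis of the Krylov subspace (with $L = M$), then forces the equality $\hat{\boldsymbol\beta}_{\mathrm{PLS}}^{(M)} = \olsEstim$. There is essentially no obstacle to overcome here: the two preceding results have already done all the work, and the corollary is a genuine one-line consequence. The only point that merits being stated explicitly is that the constrained problem admits a unique solution, which is immediate under the standing non-singularity hypothesis on $\xcov$.
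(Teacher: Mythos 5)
Your argument is correct and is exactly the paper's intended proof: the paper simply states that the corollary is a direct consequence of Theorem \ref{thm:ols} together with the characterization of $\plsEstim$ as the restricted least squares estimator in Theorem \ref{thm:equiv}, which is precisely the reasoning you spell out. Your only addition is making the uniqueness of the constrained minimizer explicit, which is a reasonable elaboration rather than a different route.
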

  \begin{proof}
    It is a direct consequence of Theorem \ref{thm:ols} and the definition
    of $\plsEstim$ as a restricted least squared estimator in Theorem \ref{thm:equiv}.
  \end{proof}

\section{Relation between partial least squares and ordinary least squares}
  \label{sec:convergence}
  As described in the previous section, the PLS estimator of the vector of coefficients of a linear regression model with L components
  converges to the ordinarly least squares estimator as $L$ increases.
  Furthermore, they coincide when $L \ge M$, the number of distinct eigenvalues of
  $\xcov$.
  The goal of this section is to provide
  an upper bound for the distance between $\plsEstim$ and $\olsEstim$.
  In order to do so, we take advantage of the formulation of PLS in Theorem \ref{thm:equiv},
  as a constrained optimization problem that can be solved using conjugate gradients.
  The first part of this section follows the convergence analysis for the conjugate
  gradient method in \cite{wrightNumerical1999}.
  First, the PLS estimator is defined as the solution of  yet another optimization problem
  in which a distance to the OLS estimator is minimized subject to some constrains.
  \begin{proposition}
    \label{prop:pls_optimization_ols}
    The PLS estimator of the vector of coefficients of a linear regression model with $L$ components is the solution to the optimization problem
    \begin{equation}
      \label{eq:estimator_norm}
      \plsEstim =
      \argmin_{\boldsymbol \beta\in \mathcal K_{L}(\xcov, \xycov)}
      \Big\|
      \boldsymbol \beta - \hat{\boldsymbol \beta}_{\mathrm{OLS}}
      \Big\|_\xcov^2,
    \end{equation}
    where $\|\mathbf{z} \|^2_{\xcov} = \mathbf z^\top \mathbf \xcov \mathbf z$, 
    the square of the quadratic-form norm with the positive definite
    matrix $\xcov$.
  \end{proposition}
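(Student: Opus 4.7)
The plan is to reduce this statement to Theorem \ref{thm:equiv} by showing that, over any fixed subspace containing the feasible set, minimizing the residual norm $\|\mathbf y-\mathbf X\boldsymbol\beta\|^2$ is equivalent to minimizing the $\xcov$-norm distance to $\olsEstim$, up to an additive constant.

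First, I would expand the residual around $\olsEstim$. For any $\boldsymbol\beta\in\mathbb R^D$,
\begin{equation*}
  \|\mathbf y-\mathbf X\boldsymbol\beta\|^2
  = \|(\mathbf y-\mathbf X\olsEstim)+\mathbf X(\olsEstim-\boldsymbol\beta)\|^2
  = \|\mathbf y-\mathbf X\olsEstim\|^2
  + \|\mathbf X(\olsEstim-\boldsymbol\beta)\|^2
  + 2(\mathbf y-\mathbf X\olsEstim)^\top \mathbf X(\olsEstim-\boldsymbol\beta).
\end{equation*}
The cross term vanishes because the OLS normal equations give $\mathbf X^\top(\mathbf y-\mathbf X\olsEstim)=\xycov-\xcov\,\xcov^{-1}\xycov=\mathbf 0$. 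Rewriting the remaining quadratic term, $\|\mathbf X(\olsEstim-\boldsymbol\beta)\|^2=(\olsEstim-\boldsymbol\beta)^\top\xcov(\olsEstim-\boldsymbol\beta)=\|\boldsymbol\beta-\olsEstim\|_{\xcov}^2$, so
\begin{equation*}
  \|\mathbf y-\mathbf X\boldsymbol\beta\|^2
  = \|\mathbf y-\mathbf X\olsEstim\|^2 + \|\boldsymbol\beta-\olsEstim\|_{\xcov}^2.
\end{equation*}

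Second, since the first summand is independent of $\boldsymbol\beta$, minimizing the left-hand side over any set coincides with minimizing $\|\boldsymbol\beta-\olsEstim\|_{\xcov}^2$ over the same set. Restricting to $\mathcal K_L(\xcov,\xycov)$ and invoking Theorem \ref{thm:equiv}, which identifies $\plsEstim$ as the Krylov-constrained least squares estimator, yields exactly \eqref{eq:estimator_norm}.

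There is essentially no hard step: the argument is a one-line Pythagorean decomposition using orthogonality of the OLS residuals to the column space of $\mathbf X$. The only point worth noting is that $\xcov$ must be invertible for $\olsEstim$ to be defined, which is already assumed in \eqref{eq:ols_def}, and this invertibility also guarantees that $\|\cdot\|_{\xcov}$ is a genuine norm, so the minimizer over the closed subspace $\mathcal K_L(\xcov,\xycov)$ is unique and coincides with $\plsEstim$ as claimed.
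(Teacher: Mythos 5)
Your proof is correct and follows essentially the same route as the paper: both arguments show that $\|\mathbf y-\mathbf X\boldsymbol\beta\|^2$ and $\|\boldsymbol\beta-\olsEstim\|_{\xcov}^2$ differ by a constant independent of $\boldsymbol\beta$, using the normal equations $\mathbf X^\top\mathbf y=\xcov\olsEstim$, and then invoke Theorem \ref{thm:equiv}. The paper phrases this as expanding the quadratic and completing the square rather than as a Pythagorean decomposition of the residual, but the algebra is identical.
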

  \begin{proof}
    This result is a consequence of the definition of the PLS estimator with
    $L$ components provided in Theorem \ref{thm:equiv}:
    \begin{equation*}
      \label{eq:a_norm_phi_proof}
      \begin{alignedat}{1}
        \plsEstim
         & =
        \argmin_{\boldsymbol \beta\in \mathcal K_{L}(\xcov, \xycov)}
        \|\mathbf y -\mathbf X \boldsymbol \beta \|^2
        =
        \argmin_{\boldsymbol \beta\in \mathcal K_{L}(\xcov, \xycov)}
        \left(
        \mathbf y^\top \mathbf y - 2 \boldsymbol\beta^\top \mathbf X^\top \mathbf y
        + \boldsymbol \beta ^\top \mathbf X^\top \mathbf X \boldsymbol \beta
        \right)
        =
        \argmin_{\boldsymbol \beta\in \mathcal K_{L}(\xcov, \xycov)}
        \left(
        \boldsymbol \beta ^\top \mathbf X^\top \mathbf X \boldsymbol \beta
        - 2 \boldsymbol\beta^\top \mathbf X^\top \mathbf y
        \right)
        =    \\
         & =
        \argmin_{\boldsymbol \beta\in \mathcal K_{L}(\xcov, \xycov)}
        \left(
        \boldsymbol \beta ^\top \mathbf \xcov\,\boldsymbol \beta
        - 2 \boldsymbol\beta^\top  \xcov\, \olsEstim
        + \olsEstim^\top \xcov\, \olsEstim
        \right)
        =
        \argmin_{\boldsymbol \beta\in \mathcal K_{L}(\xcov, \xycov)}
        \|\boldsymbol \beta - \olsEstim \|_\xcov^2,
      \end{alignedat}
    \end{equation*}
    where we have used that $\mathbf X^\top \mathbf y = \mathbf X^\top \mathbf X \olsEstim = \xcov \olsEstim$.
  \end{proof}
  The quadratic-form norm $\| \cdot \|_{\xcov}$ is related to  the Mahalanobis distance with the covariance matrix of the OLS estimator of $\boldsymbol{\beta}$
  The following observation motivates the use of this norm as a natural way to quantify the differences between 
  $\plsEstim$ and $\olsEstim$.
  \begin{corollary}
    The PLS estimator of the vector of coefficients of a linear regression model with $L$ components is the solution of the optimization problem
    \begin{equation}
      \label{eq:estimator_norm_mah}
      \plsEstim =
      \argmin_{\boldsymbol \beta\in \mathcal K_{L}(\xcov, \xycov)}
      d_M(\boldsymbol \beta, \hat{\boldsymbol\beta}_{\mathrm{OLS}}),
    \end{equation}
    where $d_M$ is the Mahalanobis distance with respect to the matrix $\frac{1}{\sigma^2} \xcov^{-1}$, which is the covariance matrix
    of the OLS estimator of the regression coefficients conditioned to the observations of $X$.
  \end{corollary}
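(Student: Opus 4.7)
The plan is to show that the Mahalanobis objective in \eqref{eq:estimator_norm_mah} is a positive scalar multiple of the quadratic-form objective in \eqref{eq:estimator_norm}, so the two constrained optimization problems have the same minimizer, and the result then follows directly from Proposition \ref{prop:pls_optimization_ols}.

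Concretely, I would first recall the definition of the Mahalanobis distance: for a positive definite covariance matrix $\mathbf{C}$,
$$d(\mathbf{u},\mathbf{v})^2 = (\mathbf{u}-\mathbf{v})^\top \mathbf{C}^{-1} (\mathbf{u}-\mathbf{v}).$$
Applying this with $\mathbf{C} = \frac{1}{\sigma^2}\xcov^{-1}$, whose inverse is $\sigma^2\xcov$, gives
$$d_M(\boldsymbol\beta, \olsEstim)^2 = \sigma^2\,(\boldsymbol\beta - \olsEstim)^\top \xcov (\boldsymbol\beta - \olsEstim) = \sigma^2\,\|\boldsymbol\beta - \olsEstim\|_\xcov^2.$$
Since $\sigma^2>0$ does not depend on $\boldsymbol\beta$, and the square root is strictly increasing on $[0,\infty)$, minimizing $d_M(\boldsymbol\beta, \olsEstim)$ over $\mathcal{K}_L(\xcov,\xycov)$ is equivalent to minimizing $\|\boldsymbol\beta-\olsEstim\|_\xcov^2$ over the same Krylov subspace. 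By Proposition \ref{prop:pls_optimization_ols}, this minimizer is $\plsEstim$, which is exactly \eqref{eq:estimator_norm_mah}.

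For completeness, I would briefly verify the parenthetical identification of a positive multiple of $\xcov^{-1}$ as the conditional covariance matrix of $\olsEstim$ given the observations of $X$. Substituting $\mathbf{y} = \mathbf{X}\boldsymbol\beta + \boldsymbol\epsilon$ from \eqref{eq:sample_reg} into the OLS formula \eqref{eq:ols_def} yields $\olsEstim = \boldsymbol\beta + \xcov^{-1}\mathbf{X}^\top\boldsymbol\epsilon$, and the assumption that the noise components are iid with variance $\sigma^2$ gives the claim in one line via the sandwich formula.

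There is really no technical obstacle here: the statement is a cosmetic but conceptually useful reformulation of Proposition \ref{prop:pls_optimization_ols}, in which the $\xcov$-norm is reinterpreted as a statistically meaningful Mahalanobis distance. The only thing to be careful about is the invariance of $\argmin$ under multiplication of the objective by a positive constant, which is what makes the particular scalar $\sigma^2$ irrelevant for identifying the minimizer.
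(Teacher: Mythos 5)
Your proof is correct and follows essentially the same route as the paper: both reduce the Mahalanobis objective to a positive scalar multiple of $\|\boldsymbol\beta - \olsEstim\|_\xcov^2$ and then invoke Proposition \ref{prop:pls_optimization_ols}. The only difference is that you take the matrix $\frac{1}{\sigma^2}\xcov^{-1}$ from the statement at face value (obtaining the factor $\sigma^2$), whereas the paper's proof works with the actual conditional covariance $\sigma^2\xcov^{-1}$ (obtaining the factor $\frac{1}{\sigma^2}$) --- an internal inconsistency in the paper's wording that is immaterial here, since either constant is positive and does not affect the minimizer.
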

  \begin{proof}
    From \eqref{eq:ols_def},
    the variance of the OLS estimator conditioned to $\mathbf x_1, \dots \mathbf x_N$ is
	 $
      \mathbf{C}_{\mathrm{OLS}}=
      \mathrm{var}(\hat{\boldsymbol \beta}_{\mathrm{OLS}}| \mathbf x_1, \dots \mathbf x_N)=
      \sigma^2 (\mathbf X^\top \mathbf X)^{-1},
	 $
    where we have used that the $\mathrm{var}(\mathbf y | \mathbf x_1, \dots \mathbf x_N)=
      \mathrm {var}(\boldsymbol \epsilon)$, and that the observations of $\epsilon$ are iid random variables with variance $\sigma^2$.
    As a result, the squared Mahalanobis distance between the
    $\hat{\boldsymbol \beta}_{\mathrm {OLS}}$ estimator and
    some other estimator $\hat{\boldsymbol \beta}$ can be expressed as
    \begin{equation*}
      d_M(\hat{\boldsymbol \beta}, \hat{\boldsymbol \beta}_{\mathrm{OLS}})^2
      =
      (\hat{\boldsymbol \beta}- \hat{\boldsymbol \beta}_{\mathrm{OLS}})^\top
      \mathbf C^{-1}_{\mathrm{OLS}}
      (\hat{\boldsymbol \beta} - \hat{\boldsymbol \beta}_{\mathrm{OLS}})
      =
      \frac{1}{\sigma^2}
      (\hat{\boldsymbol \beta}- \hat{\boldsymbol \beta}_{\mathrm{OLS}})^\top
      (\mathbf X^\top \mathbf X)
      (\hat{\boldsymbol \beta} - \hat{\boldsymbol \beta}_{\mathrm{OLS}})
      = \frac{1}{\sigma^2}
      \|
      \hat{\boldsymbol \beta} - \hat{\boldsymbol \beta}_{\mathrm{OLS}}
      \|_\xcov^2.
    \end{equation*}
    Thus, the distance induced by the quadratic form norm $\|\cdot \|_{\xcov}$
    is proportional to the
    Mahalanobis distance with $\sigma^2 \xcov^{-1}$, the covariance matrix of the OLS estimator.
  \end{proof}

  Therefore, with $L$ components, PLS finds the closest estimator to $\olsEstim$
  with respect to the Mahalanobis distance with the covariance matrix of the OLS estimator in the Krylov subspace characterized by $\xcov$ and $\xycov$ of order $L$. 
  The Mahalanobis distance provides a natural measure of differences in the space of estimators, one that captures its geometry better than the Euclidean distance. For once, the Mahalanobis distance between the estimators is deeply related to the 
  euclidean distance between the predictions:
  \begin{equation*}
    d_M(\plsEstim, \olsEstim)^2 = \frac{1}{\sigma^2} 
    (\plsEstim - \olsEstim)^\top (\mathbf X^\top \mathbf X)
    (\plsEstim - \olsEstim)=
    \frac{1}{\sigma^2}
    \left\|\hat{\mathbf y}_{\mathrm{OLS}} - 
    \hat{\mathbf y}_{\mathrm{PLS}} 
    \right\|^2.
  \end{equation*}
  
  Additionally, the structure of Krylov subspaces  makes it possible to identify each element in a Krylov subspace of 
  order $L$ with a polynomial of order $L-1$. 
  As a result, the optimization problem in
  Proposition \ref{prop:pls_optimization_ols}, is equivalent to the optimization
  problem given in the following Corollary:

  \begin{corollary}
    \label{cor:polynomial_equiv}
    The PLS estimator with $L$ component is
    $\plsEstim = P_{L-1}^* (\xcov) \xycov$, where
    \begin{equation}
      \label{eq:polynomial_problem}
      P_{L}^* = \argmin_{P\in\mathcal P_L}
      \Big\|
      P(\xcov) \xycov -
      \hat{\boldsymbol \beta}_{\mathrm{OLS}}
      \Big\|_\xcov^2,
    \end{equation}
    and $\mathcal P_L$ is the space of polynomials of degree lower
    or equal to $L$.
  \end{corollary}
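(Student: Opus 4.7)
The plan is to treat this corollary as a change-of-variables result: we re-parameterize the Krylov subspace $\mathcal{K}_L(\xcov,\xycov)$ by polynomials and then transfer the minimization problem from Proposition \ref{prop:pls_optimization_ols} into the polynomial setting. The starting point is the characterization established there, namely that $\plsEstim$ solves
\[
\plsEstim = \argmin_{\boldsymbol\beta \in \mathcal{K}_L(\xcov,\xycov)} \|\boldsymbol\beta - \olsEstim\|_\xcov^2.
\]

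The key observation is that, by Definition \ref{def:kry}, every $\boldsymbol\beta \in \mathcal{K}_L(\xcov,\xycov)$ admits a representation
\[
\boldsymbol\beta = \sum_{i=0}^{L-1} c_i\, \xcov^{\,i}\, \xycov = P(\xcov)\,\xycov,
\]
where $P(t) = \sum_{i=0}^{L-1} c_i\, t^i \in \mathcal{P}_{L-1}$. Conversely, any $P \in \mathcal{P}_{L-1}$ yields a vector $P(\xcov)\xycov$ lying in $\mathcal{K}_L(\xcov,\xycov)$. Hence the map $P \mapsto P(\xcov)\xycov$ is a surjection from $\mathcal{P}_{L-1}$ onto $\mathcal{K}_L(\xcov,\xycov)$, and the constrained minimization over the Krylov subspace is equivalent to the unconstrained minimization
\[
\argmin_{P \in \mathcal{P}_{L-1}} \|P(\xcov)\,\xycov - \olsEstim\|_\xcov^2,
\]
which is precisely $P^*_{L-1}$. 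Taking the image of any such minimizer under the parameterization gives $\plsEstim = P^*_{L-1}(\xcov)\,\xycov$, as claimed.

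There is essentially no analytic obstacle in this argument; it is a direct translation of the existing variational characterization. The only point that warrants a brief remark is that the parameterization $P \mapsto P(\xcov)\xycov$ need not be injective (two distinct polynomials could produce the same Krylov vector if their difference annihilates $\xycov$ under $\xcov$), so $P^*_{L-1}$ may not be unique as an element of $\mathcal{P}_{L-1}$. However, the induced vector $P^*_{L-1}(\xcov)\xycov$ is unique because it equals $\plsEstim$, the unique minimizer of a strictly convex quadratic on a finite-dimensional subspace. This is the only subtlety I would explicitly flag in the write-up.
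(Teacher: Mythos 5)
Your proposal is correct and follows essentially the same route as the paper: the paper's proof likewise writes $\plsEstim = P^*_{L-1}(\xcov)\xycov$ using its membership in $\mathcal K_L(\xcov,\xycov)$ and substitutes into the variational characterization \eqref{eq:estimator_norm}. Your added remark on the possible non-injectivity of $P \mapsto P(\xcov)\xycov$ and the uniqueness of the resulting vector is a reasonable clarification the paper leaves implicit.
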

  \begin{proof}
    Since $\plsEstim\in\mathcal K_L(\xcov, \xycov)$,
    it can be expressed as
    $\plsEstim = P_{L-1}^* (\xcov)\xycov$. By substituting this expression
    into \eqref{eq:estimator_norm}, we obtain \eqref{eq:polynomial_problem}.
  \end{proof}

  As stated in the following theorem, whose proof is given in the appendix, the difference between the PLS estimator and the OLS estimator can be expressed as an optimization problem in a space of polynomials:
  \begin{theorem}
    \label{thm:error_polynomial}
    The distance between the PLS estimator and the OLS estimator fulfills
    \begin{equation}
      \label{eq:a_dist_optimization}
      \Big\|\plsEstim- \hat{\boldsymbol \beta}_{\mathrm{OLS}}\Big\|_\xcov^2
      =
      \min_{Q_L\in \Omega_L}
      \sum_{d=1}^D Q_L(\lambda_d)^2 \lambda_d \xi_d^2,
    \end{equation}
    where  $\{\lambda_d \}_{d=1}^D$ are the eigenvalues of $\xcov$, $\{\xi_d\}_{d=1}^D$ are the coefficients of
    the expansion of $\hat{\boldsymbol \beta}_{\mathrm{OLS}} $ in $\{\mathbf u_d\}_{d=1}^D$, the basis of eigenvectors of $\xcov$, and
    $\Omega_L=\{Q_L\in \mathcal P_L: Q_L(0) = -1\}$.
    Additionally, for each $L$, the minimum is reached for $Q_L^*(t) = t P_L^*(t) - 1$.
  \end{theorem}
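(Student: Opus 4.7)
The plan is to start from Corollary \ref{cor:polynomial_equiv}, which has already recast the PLS estimator as $\plsEstim = P_{L-1}^*(\xcov)\xycov$ and identified the squared $\xcov$-distance to $\olsEstim$ as a minimization over polynomials of degree at most $L-1$. From there the strategy has two ingredients: diagonalize everything via the spectral decomposition of $\xcov$ in order to reduce the matrix-valued minimization to a scalar one over the eigenvalues, and then reparametrize the polynomial space by $Q(t) = tP(t) - 1$ to put the objective in the form stated.

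For the first step, I would use the identity $\xycov = \xcov\, \olsEstim$ together with the expansion $\olsEstim = \sum_{d=1}^D \xi_d \mathbf u_d$ to get $\xycov = \sum_d \lambda_d \xi_d \mathbf u_d$, and therefore
\[
  P(\xcov)\xycov - \olsEstim = \sum_{d=1}^D \bigl(\lambda_d P(\lambda_d) - 1\bigr)\xi_d \mathbf u_d
\]
for any polynomial $P$. Because the $\mathbf u_d$ are orthonormal eigenvectors of $\xcov$ with eigenvalues $\lambda_d$, evaluating $\|\cdot\|_\xcov^2$ on this expression yields $\sum_{d=1}^D (\lambda_d P(\lambda_d) - 1)^2 \lambda_d \xi_d^2$, which already has the shape of the right-hand side once the quantity $\lambda_d P(\lambda_d) - 1$ is renamed.

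The second step is to verify that the substitution $Q(t) = tP(t) - 1$ establishes a bijection between $\mathcal P_{L-1}$ and $\Omega_L$: any such $Q$ has degree at most $L$ and satisfies $Q(0) = -1$, while conversely any $Q \in \Omega_L$ has $Q(t) + 1$ divisible by $t$, producing a unique $P \in \mathcal P_{L-1}$ via $P(t) = (Q(t)+1)/t$. Combining this bijection with the scalar reformulation gives the identity in the theorem, and the minimizer $Q_L^*$ is, by construction, $tP_{L-1}^*(t) - 1$, matching the stated formula up to the off-by-one indexing convention introduced in Corollary \ref{cor:polynomial_equiv}. The only mildly delicate point is the bookkeeping around polynomial degrees: one must ensure that shifting between $\mathcal P_{L-1}$ and $\Omega_L \subset \mathcal P_L$ preserves both the degree bound and the value at $0$, but beyond that, the argument is a direct spectral calculation.
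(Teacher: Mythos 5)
Your proposal is correct and follows essentially the same route as the paper: a spectral decomposition of $\xcov$ to reduce the problem to a scalar sum over eigenvalues, followed by the reparametrization $Q(t) = tP(t) - 1$ linking $\mathcal P_{L-1}$ to $\Omega_L$. Your explicit verification that this substitution is a bijection (via divisibility of $Q(t)+1$ by $t$) is a small point the paper leaves implicit, but the argument is otherwise the same.
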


  This theorem implies that any polynomial $Q_L \in \Omega_L$ can be used to provide an upper bound for the distance between the OLS and PLS estimators: 
    \begin{equation}
      \Big\|\plsEstim - \hat{\boldsymbol \beta}_{\mathrm{OLS}}\Big\|_\xcov^2
      \le
        \sum_{d=1}^D Q_L(\lambda_d)^2 \lambda_d \xi_d^2, \quad \text{for all } Q_L \in \Omega_L.
    \end{equation}
  Furthermore, it is possible to obtain an upper bound also in
  terms of the norm of the OLS estimator and of $Q_L$ evaluated at the eigenvalues of $\xcov$.

  \begin{corollary}
    Given a function
    $H: \Omega_L \rightarrow \mathbb{R}$
    that, for any polynomial $R\in \Omega_L$, fulfills $R(\lambda_d)^2\le H(R)$ over all $d=1,\dots,D$,
    and given a particular polynomial $Q_L\in \Omega_L$, 
    \begin{equation}
      \label{eq:bound_obtention}
      \Big\|\plsEstim - \hat{\boldsymbol \beta}_{\mathrm{OLS}}\Big\|_\xcov^2
      \le
      H(Q_L)
      \Big\| \hat{\boldsymbol \beta}_{\mathrm{OLS}} \Big\|^2_\xcov, \quad \text{for all } Q_L \in \Omega_L.
    \end{equation}
  \end{corollary}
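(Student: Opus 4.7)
The plan is to apply Theorem \ref{thm:error_polynomial} directly, then use the uniform bound $H(Q_L)$ to pull it outside the sum, and finally recognize the remaining factor as $\|\olsEstim\|_\xcov^2$ expanded in the eigenbasis of $\xcov$. The argument is essentially bookkeeping and I would expect it to occupy only a handful of lines.

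First, I would start from the identity established in Theorem \ref{thm:error_polynomial}, namely $\|\plsEstim-\olsEstim\|_\xcov^2 = \min_{R\in\Omega_L}\sum_{d=1}^D R(\lambda_d)^2 \lambda_d \xi_d^2$. Since the left-hand side equals the minimum over $\Omega_L$, for any particular $Q_L\in\Omega_L$ it is bounded above by the value of the sum at $Q_L$. Next, I would invoke the hypothesis $Q_L(\lambda_d)^2 \le H(Q_L)$ for every $d=1,\dots,D$ to factor $H(Q_L)$ out of the sum, obtaining the upper bound $H(Q_L)\sum_{d=1}^D \lambda_d \xi_d^2$.

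The only substantive step left is to identify $\sum_{d=1}^D \lambda_d \xi_d^2$ with $\|\olsEstim\|_\xcov^2$. Writing $\olsEstim = \sum_d \xi_d \mathbf u_d$ in the orthonormal eigenbasis of $\xcov$ and using $\xcov\mathbf u_d = \lambda_d \mathbf u_d$ together with $\mathbf u_d^\top \mathbf u_{d'} = \delta_{dd'}$ gives $\olsEstim^\top \xcov\,\olsEstim = \sum_d \lambda_d \xi_d^2$, which closes the chain of inequalities. There is no real obstacle in this derivation; the corollary is a uniform-bound factoring of Theorem \ref{thm:error_polynomial}, and the only point worth highlighting in the written proof is the eigenbasis calculation that turns the weighted sum of squared coefficients into the $\xcov$-norm of the OLS estimator.
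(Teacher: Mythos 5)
Your proposal is correct and follows essentially the same route as the paper: both start from the identity in Theorem \ref{thm:error_polynomial}, bound the minimum by the value at the chosen $Q_L$, factor out $H(Q_L)$ using the uniform bound, and identify $\sum_{d=1}^D \lambda_d \xi_d^2$ with $\|\hat{\boldsymbol\beta}_{\mathrm{OLS}}\|_\xcov^2$ via the eigenbasis expansion. The only cosmetic difference is that the paper keeps the minimum over $\Omega_L$ through one more step of the chain before specializing to $Q_L$, whereas you specialize immediately; the content is identical.
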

  \begin{proof}
    From Theorem \ref{thm:error_polynomial}, and the condition $R(\lambda_d)^2 \le H(R) $ for $d = 1, \ldots, D$,
    \begin{equation*}
      \label{eq:bound_obtention_proof}
      \Big\|\plsEstim- \hat{\boldsymbol \beta}_{\mathrm{OLS}}\Big\|_\xcov^2
      =
      \min_{R\in \Omega_L}
      \sum_{d=1}^D R(\lambda_d)^2\lambda_d\xi_d^2
      \le
      \min_{R \in \Omega_L} H(R) \,    \sum_{d=1}^D \lambda_d\xi_d^2 
      = \min_{R\in \Omega_L} H(R) \, \Big\| \hat{\boldsymbol \beta}_{\mathrm{OLS}} \Big\|^2_\xcov
      \le
      H(Q_L)  \, \Big\| \hat{\boldsymbol \beta}_{\mathrm{OLS}} \Big\|^2_\xcov.
      \qedhere
    \end{equation*}

  \end{proof}

  Therefore, by choosing an $H$ function and a specific polynomial $Q_L$, an
  upper bound on the PLS error can be obtained. There are different choices for
  $H$. In \cite{wrightNumerical1999}, a number of results are given using the
  upper bound $H_1(Q_L) = \max_d Q_L(\lambda_d)^2$. However, this bound has a major
  disadvantage: it is not straightforward to calculate the polynomial $Q_L$ that
  minimizes $H_1$.  
  In the remainder of this section, the simpler upper bound $H_2(Q_L)=\sum_{d=1}^D Q_L(\lambda_d)^2$ is considered.
  The following theorem provides an uper bound on the PLS error by calculating the polynomial in $\Omega_L$ that minimizes $H_2$.
  \begin{theorem}
    \label{thm:bound}
    The following bound for the squared norm of the difference
    between
    the $L$-th PLS estimator and the OLS estimator holds:
    \begin{equation}
      \label{eq:final_bound}
      \Big\|\plsEstim - \hat{\boldsymbol\beta}_{\mathrm{OLS}}\Big\|_\xcov^2
      \le C_L
      \Big\|\hat{\boldsymbol\beta}_{\mathrm{OLS}}\Big\|_\xcov^2, \qquad
    \end{equation}
    where
    \begin{equation}
      \label{eq:bound_matrices}
      C_L = D(1-\mathbf c_L^\top \mathbf H_L^{-1} \mathbf c_L),
      \qquad
      \mathbf H_L = \begin{pmatrix}
        {\mu}'_2     & \dots  & {\mu}'_{L+1} \\
        \vdots       & \ddots & \vdots       \\
        {\mu}'_{L+1} & \dots  & {\mu}'_{2L}  \\
      \end{pmatrix},
      \qquad
      \mathbf c_L = \begin{pmatrix}
        {\mu}'_1 \\ \vdots \\ {\mu}'_L
      \end{pmatrix},
    \end{equation}
    and $\mu'_l$ is the $l$-th raw moment of
    the distribution of the eigenvalues of $\xcov$.
  \end{theorem}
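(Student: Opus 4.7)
The plan is to apply the preceding corollary with the choice $H_2(Q_L) = \sum_{d=1}^D Q_L(\lambda_d)^2$, and then explicitly compute the minimizer of $H_2$ over $\Omega_L$. The condition $R(\lambda_d)^2 \le H_2(R)$ required by the corollary holds trivially for every $R \in \Omega_L$ since each summand is nonnegative, so the bound $\|\plsEstim - \olsEstim\|_{\xcov}^2 \le H_2(Q_L^*)\|\olsEstim\|_{\xcov}^2$ will hold for the optimal $Q_L^* = \arg\min_{Q_L \in \Omega_L} H_2(Q_L)$. It therefore suffices to show $\min_{Q_L \in \Omega_L} H_2(Q_L) = C_L$.

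First I would parametrize a generic element of $\Omega_L$ via its coefficients: since $Q_L(0) = -1$, write $Q_L(t) = -1 + \sum_{l=1}^L a_l t^l$ with $\mathbf{a} = (a_1, \ldots, a_L)^\top \in \mathbb{R}^L$. Expanding the square and grouping terms yields
\begin{equation*}
  H_2(Q_L) = \sum_{d=1}^D\Bigl(-1 + \sum_{l=1}^L a_l \lambda_d^l \Bigr)^2
  = D - 2\sum_{l=1}^L a_l \sum_{d=1}^D \lambda_d^l + \sum_{l,l'=1}^L a_l a_{l'} \sum_{d=1}^D \lambda_d^{l+l'}.
\end{equation*}
Identifying $\sum_{d=1}^D \lambda_d^l / D$ with the $l$-th raw moment $\mu'_l$ of the empirical eigenvalue distribution, this becomes the quadratic form $H_2(Q_L) = D\bigl(1 - 2\mathbf{a}^\top \mathbf{c}_L + \mathbf{a}^\top \mathbf{H}_L \mathbf{a}\bigr)$, where $\mathbf{c}_L$ and $\mathbf{H}_L$ are exactly the moment vector and Hankel-type moment matrix defined in the statement.

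Next I would minimize over $\mathbf{a} \in \mathbb{R}^L$ by setting the gradient to zero: $-2\mathbf{c}_L + 2\mathbf{H}_L \mathbf{a} = \mathbf{0}$ gives $\mathbf{a}^* = \mathbf{H}_L^{-1}\mathbf{c}_L$. Substituting back collapses the cross and quadratic terms to $-\mathbf{c}_L^\top \mathbf{H}_L^{-1}\mathbf{c}_L$, yielding $H_2(Q_L^*) = D\bigl(1 - \mathbf{c}_L^\top \mathbf{H}_L^{-1}\mathbf{c}_L\bigr) = C_L$. Combined with the bound from the corollary, this delivers \eqref{eq:final_bound}.

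The main obstacle is justifying the invertibility of $\mathbf{H}_L$, since it is a Hankel matrix built from moments of a discrete measure on the eigenvalues. The relevant fact is that $\mathbf{H}_L$ is the Gram matrix of the monomials $t, t^2, \ldots, t^L$ in $L^2$ with respect to the weighted empirical measure $\frac{1}{D}\sum_d \delta_{\lambda_d}$ weighted by $\lambda_d$ (or equivalently the Gram matrix of $\{\sqrt{\lambda_d}\,\lambda_d^{l-1}\}_d$ across $l$), so it is positive definite provided the number of distinct positive eigenvalues exceeds $L$. This is precisely the regime in which the bound is non-trivial: whenever $L$ is at least the number $M$ of distinct eigenvalues, the earlier corollary already gives $\plsEstim = \olsEstim$, and the bound can be interpreted in the limit $C_L \to 0$. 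I would briefly note this dichotomy to close the argument cleanly.
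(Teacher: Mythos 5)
Your proposal is correct and follows essentially the same route as the paper: parametrize $\Omega_L$ by the coefficients $(a_1,\dots,a_L)$, write $H_2$ as the quadratic form $D(1-2\mathbf a^\top\mathbf c_L+\mathbf a^\top\mathbf H_L\mathbf a)$, and minimize by solving $\mathbf H_L\mathbf a=\mathbf c_L$. Your closing remark on the invertibility of $\mathbf H_L$ is a worthwhile addition that the paper omits, though note that $\mathbf H_L$ is the Gram matrix of $t,t^2,\dots,t^L$ with respect to the \emph{unweighted} empirical measure $\frac{1}{D}\sum_d\delta_{\lambda_d}$ (not the $\lambda_d$-weighted one), and positive definiteness already holds when the number of distinct positive eigenvalues is at least $L$, not strictly greater.
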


  This result provides an upper bound for the distance between 
  $\plsEstim$ and $\hat{\boldsymbol\beta}_{\mathrm{OLS}}$
  that depends only on the distribution of the eigenvalues of the regressor covariance matrix.
  Explicit expressions of this bound can be derived for PLS regression with one and two components.
  \begin{corollary}
    \label{cor:explicit_bounds}
    The bounds given in \eqref{eq:final_bound} for $L=1$ and $L=2$ can be expressed as a function of
    the coefficient of variation ($c_v=\sigma/\mu$), the coefficient of
    asymmetry ($\gamma$) and the kurtosis ($\kappa$) of the eigenvalues of
    $\xcov$.
    \begin{equation}
      \label{eq:bound_expression}
      C_1 = D \frac{c_v^2}{1+c_v^2},
      \qquad \qquad
      C_2 = D \frac{c_v^4(\kappa - \gamma^2 -1) }
      {(\kappa -\gamma^2)c_v^4 + (\kappa-3-2\gamma)c_v^3 -2\gamma\, c_v +1}.
    \end{equation}
  \end{corollary}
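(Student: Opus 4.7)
The plan is to specialize the formula $C_L = D(1-\mathbf{c}_L^\top \mathbf{H}_L^{-1}\mathbf{c}_L)$ from Theorem \ref{thm:bound} for $L=1$ and $L=2$, and then translate raw moments $\mu'_l$ into the standardized descriptors $c_v$, $\gamma$, $\kappa$ by means of the standard identities
\begin{equation*}
\mu'_1 = \mu,\quad
\mu'_2 = \mu^2(1+c_v^2),\quad
\mu'_3 = \mu^3(1+3c_v^2+\gamma c_v^3),\quad
\mu'_4 = \mu^4(1+6c_v^2+4\gamma c_v^3+\kappa c_v^4),
\end{equation*}
obtained by writing each $\mu'_l$ as an expansion of central moments around $\mu$ and dividing through by the appropriate power of $\mu$.

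For $L=1$, the matrix $\mathbf{H}_1=(\mu'_2)$ and the vector $\mathbf{c}_1=(\mu'_1)$ are scalars, so $\mathbf{c}_1^\top\mathbf{H}_1^{-1}\mathbf{c}_1=(\mu'_1)^2/\mu'_2$. Substituting the identities above, I get $\mathbf{c}_1^\top\mathbf{H}_1^{-1}\mathbf{c}_1 = 1/(1+c_v^2)$, which gives $C_1 = D c_v^2/(1+c_v^2)$ immediately. This case is essentially a one-line verification.

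For $L=2$, I would apply the closed-form inverse for a $2\times 2$ matrix to $\mathbf{H}_2$, yielding
\begin{equation*}
\mathbf{c}_2^\top\mathbf{H}_2^{-1}\mathbf{c}_2
= \frac{(\mu'_1)^2\mu'_4 - 2\mu'_1\mu'_2\mu'_3 + (\mu'_2)^3}{\mu'_2\mu'_4-(\mu'_3)^2},
\end{equation*}
so that $C_2/D$ equals $1$ minus that ratio, i.e.\ the numerator
$\mu'_2\mu'_4 - (\mu'_3)^2 - (\mu'_1)^2\mu'_4 + 2\mu'_1\mu'_2\mu'_3 - (\mu'_2)^3$
divided by $\mu'_2\mu'_4-(\mu'_3)^2$. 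Substituting the expressions above, every monomial in the numerator carries a common factor $\mu^6$ and every monomial in the denominator carries a common factor $\mu^6$ as well (since $\mu'_2\mu'_4$ and $(\mu'_3)^2$ are both of weight six in $\mu$), so that $\mu$ cancels entirely and the ratio becomes a rational function of $c_v,\gamma,\kappa$ alone.

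The main obstacle is purely algebraic: after substitution the numerator and denominator of $C_2/D$ contain many terms in $c_v,\gamma,\kappa$, and one has to expand, cancel the terms of order $c_v^0,c_v^1,c_v^2$ coming from the leading ``$1$'' in each $\mu'_l/\mu^l$, and collect the remaining powers of $c_v$ with the coefficients involving $\gamma$ and $\kappa$ to match the expression in \eqref{eq:bound_expression}. I would perform this expansion carefully, verifying that the leading $\mu^6$ cancels and that the constant term in $c_v$ of the numerator is $c_v^4(\kappa-\gamma^2-1)$, as announced. No conceptual step is required beyond bookkeeping, so the argument reduces to presenting the intermediate expansions and checking the final grouping of terms.
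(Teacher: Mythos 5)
Your proposal is correct and follows exactly the route the paper takes: specialize $C_L=D(1-\mathbf c_L^\top\mathbf H_L^{-1}\mathbf c_L)$ for $L=1,2$, rewrite the raw moments $\mu'_1,\dots,\mu'_4$ in terms of $\mu$, $c_v$, $\gamma$, $\kappa$, and simplify; your moment identities, the $L=1$ verification, and the $L=2$ quadratic-form expression $\big((\mu'_1)^2\mu'_4-2\mu'_1\mu'_2\mu'_3+(\mu'_2)^3\big)/\big(\mu'_2\mu'_4-(\mu'_3)^2\big)$ are all right. One caveat about the final bookkeeping step you defer: carrying out the expansion gives numerator $\mu^6c_v^6(\kappa-\gamma^2-1)$ and denominator $\mu^6\big(c_v^2+2\gamma c_v^3+(\kappa-3)c_v^4-2\gamma c_v^5+(\kappa-\gamma^2)c_v^6\big)$, so after cancelling $\mu^6c_v^2$ one obtains $C_2=D\,c_v^4(\kappa-\gamma^2-1)/\big(1+2\gamma c_v+(\kappa-3)c_v^2-2\gamma c_v^3+(\kappa-\gamma^2)c_v^4\big)$, which does not literally coincide with the denominator printed in \eqref{eq:bound_expression}. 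A numerical check with eigenvalues $\{1,2,3\}$ (where the exact value is $C_2=1/19$) confirms the expanded form just given rather than the printed one, so do not expect your expansion to terminate at the published expression: the printed denominator appears to contain a transcription error, while the numerator $c_v^4(\kappa-\gamma^2-1)$ and your whole derivation strategy are correct.
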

  \begin{proof}
    These identities are obtained by expressing the
    raw moments that appear in $C_L$ in terms of $\mu$, $\sigma$, $\gamma$ and $\kappa$, and then simplifying the resulting formulas.
  \end{proof}
  From these expressions it is apparent that the more
  concentrated the eigenvalues of $\xcov$ ($c_v\rightarrow 0$), the fewer PLS
  components are needed to approximate $\olsEstim$ with a given accuracy.
  The bound for $L=1$ depends only on the coefficient of variation
  of the distribution of eigenvalues $c_v=\sigma/\mu$.  
  It is proportional to $c_v^2$ in the limit $c_v \rightarrow 0^+$.
  Therefore, if the eigenvalues of  $\xcov$ are grouped in one tight cluster, keeping a single PLS component yields an
  accurate approximation of $\hat{\boldsymbol\beta}_{\mathrm{OLS}}$. 
  The value of the bound for $L = 2$ depends not only on $c_v$ but also on the coefficient of asymmetry and the kurtosis, which makes it harder to interpret. 
  However, it is proportional to $c_v^4$ in the limit $c_v \rightarrow 0^+$.

  Additionally, from Pearson's inequality ($\kappa \ge 1 + \gamma^2$), the quantity $\kappa - \gamma^2 - 1$ is non-negative  
  \citep{sharmaSkewness2015}.  
  This quantity is zero for dichotomous distributions.
  Thus, $C_2$ should be small when the
  eigenvalues are distributed in two tightly packed clusters. 
  In the next section, we provide  numerical illustrations of the dependence of distance between $\plsEstim$ and $\olsEstim$ as a function of $L$, for different distributions of the eigenvalues of $\xcov$.

\section{Empirical study} 
  \label{sec:examples}
  In this section, an empirical study is carried out to investigate the effect of the eigenvalue distribution of the regressor covariance matrix on PLS.
  Specifically, we analyze the dependence of the quadratic-form distance between $\plsEstim$ and $\olsEstim$, the upper bound established in Theorem \ref{thm:bound} for this distance, and the accuracy of the linear predictor as a function of the number of PLS components considered.
  The analysis is first performed in regression problems with synthetic data for different forms of the distribution of eigenvalues. 
  The corresponding analysis is then performed for the California Housing dataset \citep{kelleypaceSparseSpatial1997}.

  \subsection{Synthetic data}

    In this section, synthetic data are used to illustrate the behavior of the PLS
    method depending on the eigenvalue distribution of the regressor covariance
    matrix. 
    Five regression problems are considered. 
    In these problems, $X$
    is modelled as a multivariate normal vector $X\sim N(0,\boldsymbol \Sigma)$.
    The eigenvalues of the covariance matrix $\boldsymbol \Sigma$, $\{\lambda_d\}_{d=1}^D$, are sampled from different distributions with specific characteristics. 
    Specifically, $D=30$ eigenvalues are selected with the following characteristics:
    \begin{enumerate}
      \setlength{\itemsep}{0pt}
      \item 30 equally spaced eigenvalues from 2.5 to 7.5.
      \item One cluster of 30 eigenvalues sampled from $N(5,0.1)$.
      \item Two clusters of 15 eigenvalues, each sampled from $N(2.5,0.1)$ and $N(7.5,0.1)$.
      \item Three clusters of 10 eigenvalues sampled from $N(2.5,0.1)$, $N(5,0.1)$, and
            $N(7.5,0.1)$.
      \item Three clusters of 10 eigenvalues sampled from $N(0.2,0.1)$, $N(5,0.1)$, and
            $N(7.5,0.1)$; so that one of the clusters is very close to zero.
    \end{enumerate}
    These eigenvalue distributions are displayed in Figure \ref{fig:eig_dist}.
    The actual covariace matrix is generated by a random rotation of the diagonal eigenvalue matrix: $\boldsymbol \Sigma = \mathbf Q^\top \mathrm{diag}
      (\lambda_1,\dots, \lambda_D)\mathbf Q$, where $\mathbf Q$ is a
    uniformly-distributed orthogonal random matrix. 
    The rotation matrix $\mathbf Q$ is obtained from the QR decomposition of a random matrix whose entries are sampled from a standard normal distribution \citep{mezzadriHowGenerateRandom2007}. 
    Finally, the data matrix, $\mathbf X = (\mathbf x_1,\dots, \mathbf x_N)^\top$  is
    obtained by stacking $N=1000$ samples from this random vector.

    To generate the response data, the linear model with additive noise presented
    in \eqref{eq:sample_reg} is used. 
    The $\boldsymbol \beta$ parameter is a random vector whose entries are sampled from a uniform distribution in $[0,1]$. 
    The noise $\boldsymbol\epsilon$ is sampled from a
    $N(0,\sigma^2)$ distribution, where $\sigma=0.1 \,\mathrm{std} (\mathbf X
      \boldsymbol \beta)$, so that the model is not dominated by the noise. 
    Finally, the response vector is computed as $\mathbf y=\mathbf X
      \boldsymbol \beta + \boldsymbol \epsilon$.
    
    In the experiments carried out, the closeness between $\plsEstim$ and $\olsEstim$ is quantified in terms of the normalized estimator difference: 
    \begin{equation}
      \label{eq:normal_diff}
      \mathrm{NED}_L =
      \frac{
      \|\plsEstim - \hat{\boldsymbol\beta}_{\mathrm{OLS}}\|_\xcov^2
      }
      {
      \|\hat{\boldsymbol\beta}_{\mathrm{OLS}}\|_\xcov^2
      },
    \end{equation}
    From \eqref{eq:final_bound}, it is apparent that $C_L$ is a bound on $\mathrm{NED}_L$.
    The results reported are averages over $20$ realizations of the data.

    The plots in the left column of Figure \ref{fig:pls_convergence}
    display the dependence of the normalized differences between the estimators, $\mathrm{NED}_L$, and of the corresponding upper-bound, $C_L$, on $L$, the number of PLS components considered. 
    These plots show how, as $L$ increases, the decrease of the bound introduced in Theorem \ref{thm:bound}  parallels that of the difference between the estimators. 
    As discussed in the previous section, PLS can be formulated as a polynomial fitting problem.
    In particular, Theorem \ref{thm:error_polynomial} provides a way of expressing the error of the estimator with $L$ iterations as a function of the values of some 
    polynomial $Q_L$, of degree lower or equal to $L$ that fulfills $Q_L(0)=-1$.     
    The optimal polynomials $Q^*_L$ defined in Theorem \ref{thm:error_polynomial} are plotted in the right column of Figure \ref{fig:pls_convergence} . 

    It is possible to interpret the features of the curves displayed in the
    left column of Figure \ref{fig:pls_convergence} from the characteristics of the polynomials plotted in the right column of this figure. 
    In the first scenario, in which the eigenvalues are uniformly distributed in an interval separate from zero, considering more components allows to find polynomials that fulfill $Q_L(0) = -1$ and take small values for all the eigenvalues.  
    In the second one, the decrease of $\mathrm{NED}_L$ is much steeper because having the eigenvalues closely packed in a single cluster makes the polynomial fitting problem much simpler. 
    For a given numbers of components, the corresponding polynomials take smaller values on the eigenvalues in the second scenario than in the first one.
    This result is consistent with the dependency of $C_1$ and $C_2$ on $c_v$ given in Corollary \ref{cor:explicit_bounds}.

    Figure \ref{fig:pls_convergence} also shows that the decrease of $\mathrm{NED}_L$ with $L$ follows different patterns depending on the number of clusters in which the eigenvalues are grouped. 
    In particular, the decrease is sharper for specific numbers of components. 
    When the eigenvalues are grouped in two clusters, the
    first abrupt decrease of $\mathrm{NED}_L$ occurs between $L=1$ and $L=2$ components. 
    This observation can be explained by noting it is not possible to find a polynomial of degree one (i.e., a straight line) that takes small values on both clusters and passes through the point $(0, -1)$. 
    However, a polynomial of degree two (i.e., a parabola) provides a reasonable fit.
    Significant improvements are observed also for $L=4$ and $L=6$ components.
    This is due to the fact that, in those cases, it is possible to find polynomials that pass through (0,-1) with equal numbers of roots located in the vicinity of each of the clusters.
    A similar analysis can be carried out for the fourth scenario, in which the eigenvalues are clustered in 3 groups. 
    In this case, sharper improvements are found for $L = 3$ and $L = 6$.

    To complete the analysis, we consider a case in which one of the clusters of eigenvalues is close to 0. 
    From the plot in the bottom left of Figure \ref{fig:pls_convergence} it is apparent that the decrease of $\mathrm{NED}_L$ with $L$ is rather slow. 
    The reason for this is that, since the fitted polynomial has to go through $(0, -1)$, large values of $L$ are needed so that the polynomial can take simultaneously small values for the eigenvalues in the vicinity of $0$ and in the other clusters.

    We now compare the performance of PCA and PLS regression as a function of $L$, the number of components considered. 
    The quality of the predictions is measured in terms of the coefficient of determination ($R^2$ score), which represents the proportion of explained variance.
    In most regression problems PLS is expected to outperform PCA because, in the definition of the components, the correlations between the regressor and response variables are taken into account in the former, but not in the latter \citep{frankChemometrics1993}. 
    Since the properties of PLS depend on the distribution of the eigenvalues of $\xcov$, the regressor covariance matrix, we carry out the analysis for the five scenarios described earlier.
    In Figure \ref{fig:r2_spread} we compare the curves that trace the dependence $R^2$ on $L$, for PLS (left plots) and PCA (right plots) in the first two synthetic datasets.    
    This comparison illustrates the differences between problems in which the eigenvalues of the regressor covariance matrix are uniformly distributed and problems in which they are clustered around a particular value, different from zero. 
    As expected, PLS obtains better results when the eigenvalues concentrate around a single value.
    In fact, when they are clustered in a single tight group, the PLS regression model with only one component provides a very accurate prediction of the response. 
    By contrast, when the eigenvalues are uniformly distributed, more components are needed.
    The behavior of PCA is markedly different. 
    In the case of clustered eigenvalues, the $R^2$ score of PCA increases linearly with the number of eigenvalues considered. 
    This is to be expected since the increment in explained variance is proportional to the eigenvalue that corresponds to the eigenvector considered by PCA at each step. 
    If the eigenvalues are spread out uniformly, PCA considers first the components that correspond to the larger eigenvalues. 
    Therefore, the magnitude of the eigenvalues decreases as more components are considered, which leads to a reduction of the rate at which $R^2$ increases for larger $L$.
    Additionally, Figure \ref{fig:r2_spread} also shows that PCA needs many more components to achieve the same $R^2$ scores that PLS.
    
    The plots displayed in Figure \ref{fig:r2_clusters} illustrate the properties of the curves that trace the evolution of the $R^2$ score as a function of $L$, depending on the number of clusters in which the eigenvalues are grouped.
    From these results we conclude that, in this case, the number of PLS components necessary to obtain a value of $R^2$ close to $1$ (perfect prediction) coincides with the number of eigenvalue clusters.
    This is consistent with the analysis of the differences between $\plsEstim$ and $\olsEstim$ for these datasets.
    Regarding  PCA, we can see how the number of clusters of eigenvalues has only minor effects in dependence of the $R^2$ scores with $L$. 
    For example, with two clusters, the $R^2$ increases faster during the first 15 iterations, which corresponds to the cluster with the largest 15 eigenvalues. For the scenario with three clusters of eigenvalues, the rate of increase drops after $10$ and $20$ components have been considered. 
    These correspond to having included in the model all the components in the first, and in the first and second largest clusters, respectively.

    Finally, we use the last two scenarios to investigate the impact of having a cluster of eigenvalues close to zero. 
    Figure \ref{fig:r2_zero} shows how that for $L > 1$, the performance of PLS deteriorates when there is a cluster of small eigenvalues. 
    This is again to be expected from the theoretical analysis carried out in the previous section because of the difficulties of fitting a polynomial that goes throgh $(0,1)$ and takes small values at the locations of the eigenvalues in the clusters.
    By contrast, PCA achieves better results when a sizeable fraction of the eigenvalues are close to zero.
    In fact, the maximum value of $R^2$ is attained for $L = 20$, once all the components that correspond to eigenvalues significantly larger than zero have been selected. 
    Nonetheless, PLS outperforms PCA regression also in these scenarios.
 
  \subsection{The Californian Housing dataset}

    In this section we analyze the properties of PLS regression for the California Housing dataset \cite{kelleypaceSparseSpatial1997}
    In this problem, the goal is to predict the median house value in a particular block group in a California district using 8 attributes ($D=8$): the median house age, the average number of rooms, the average number of bedrooms, the number of people residing within the block, the average number of household members, and the latitude and longitude of the block group.
    As a preprocessing step, both the regressor vector and the response variable are centered so that they have zero mean.
    Each column of $\mathbf X$ is scaled so that it has unit variance.
    In the original dataset, a median house value of $500,000 \$$ is assigned to instances whose actual value is above that threshold.
    To avoid distortions associated to this thresholding, these examples have been discarded.
 
    Figure \ref{fig:ch_dist} shows the eigenvalue distribution of the regressor covariance matrix for the California Housing dataset. 
    The eigenvalues are roughly grouped in three clusters, one of them close to zero. 
    This pattern is similar to the last synthetic dataset analyzed in the previous section.
    However, the eigenvalues in the central cluster are more spread out. 
    This dispersion hinders somewhat the performance of PLS, which is nonetheless fairly good. 
    The differences between the PLS and the OLS estimators as a function of $L$ are analyzed in Figure\ref{fig:ch_conv}.
    The left plot displays the dependence of these differences, quantified by $\mathrm{NED}_L$, and of $C_L$, the upper bound of these differences derived in this work, as a funtion of $L$, the number of PLS components considered. 
    Note that, for $L = 8$ the PLS coincides with the OLS estimator. 
    As expected, the distance between the estimators decreases slowly, because of the presence of the small eigenvalues and, to a lesser extent, the dispersion of the medium-sized eigenvalues.

    Figure \ref{fig:ch_r2} presents the results of a comparison between PCA and PLS regression.
    The left plot displays the curves that trace the dependence of the $R^2$ score, a measure of the quality of the predictions, with $L$, the number of components considered. 
    From these results one concludes that PLS obtains better results than PCA, and needs fewer components to achieve an accuracy comparable to OLS.
    The evolution of  $C_L$  as a function of $L$ is displayed in the right plot of this figure.
    Note that the descent of the bound mirrors the increase of the $R^2$ score as $L$ increases.
    This illustrates that the upper bound defined in Theorem \ref{thm:bound} provides an effective way to monitor the performance of PLS.

    \begin{figure}[p]
      \centering
      \includegraphics[width=0.5 \textwidth]{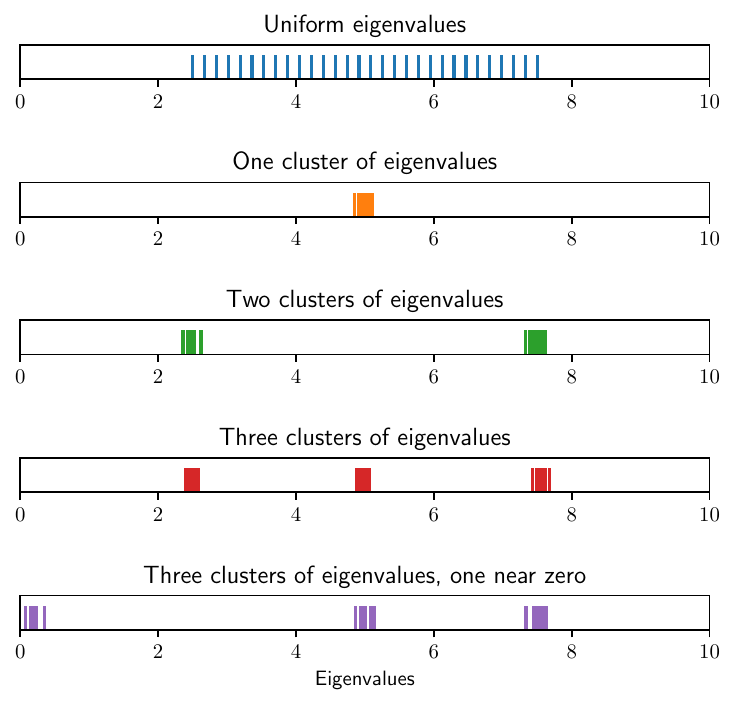}
      \caption{Eigenvalue distributions in the synthetic regression problems}
      \label{fig:eig_dist}
    \end{figure}

    \begin{figure}[ht]
      \centering
      \includegraphics[width=1 \textwidth]{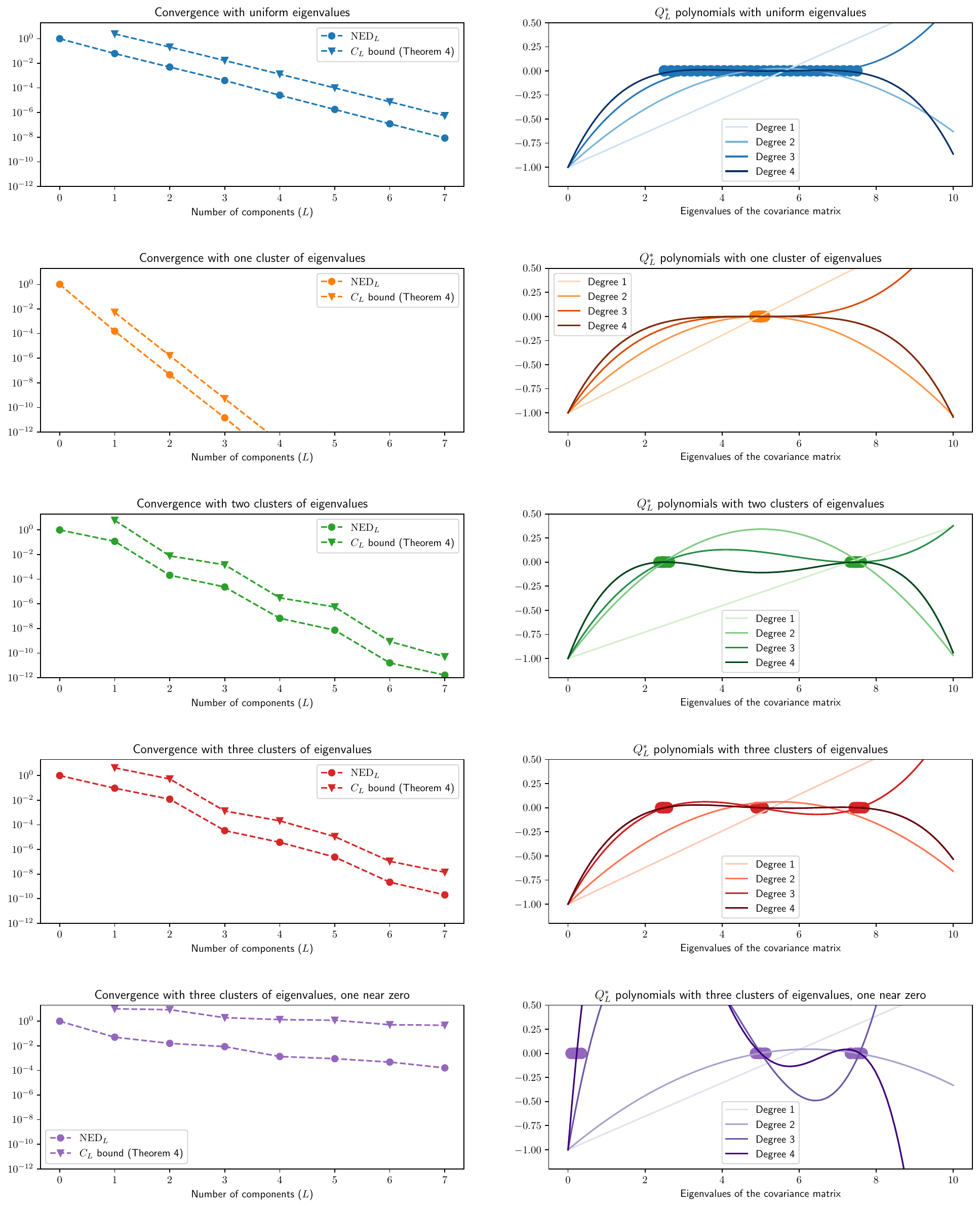}
      \caption{PLS estimator distance analysis with different distributions of eigenvalues
        of the regressor covariance matrix}
      \label{fig:pls_convergence}
    \end{figure}
    \begin{figure}[ht]
      \centering
      \includegraphics[width=1 \textwidth]{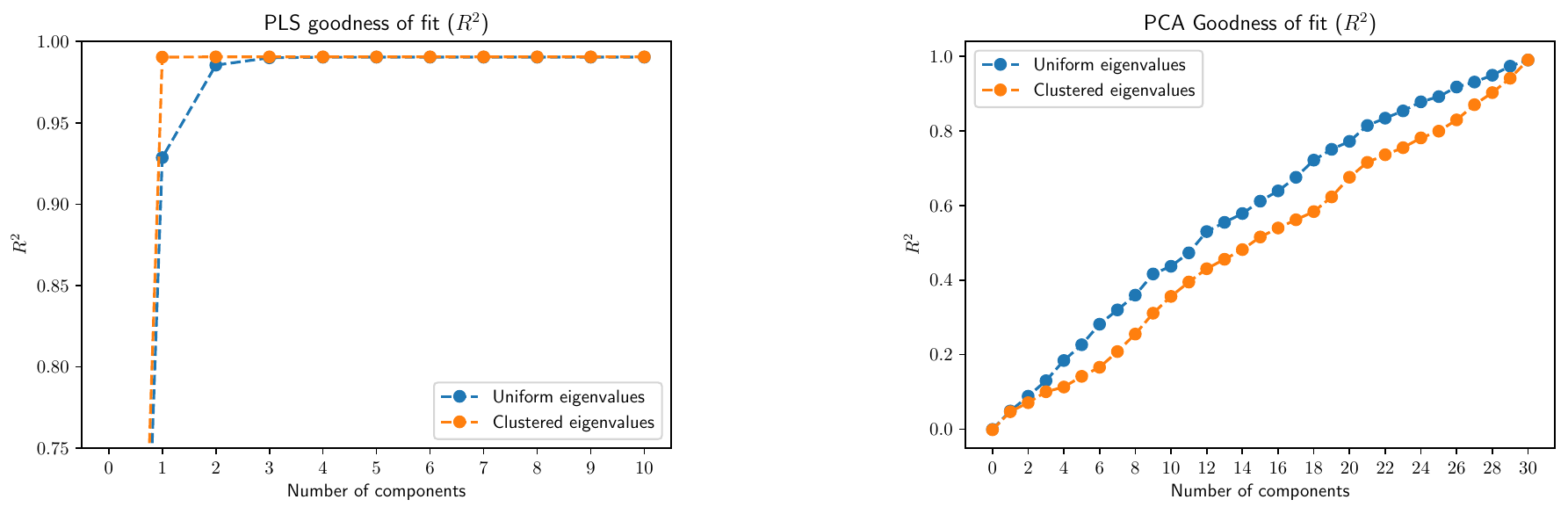}
      \caption{Accuracy of the predictions of PCA and PLS regression measured in terms of the $R^2$ score
      depending on whether the eigenvalues are concentrated or spread out uniformly.}
      \label{fig:r2_spread}
    \end{figure}
    \begin{figure}[ht]
      \centering
      \includegraphics[width=1 \textwidth]{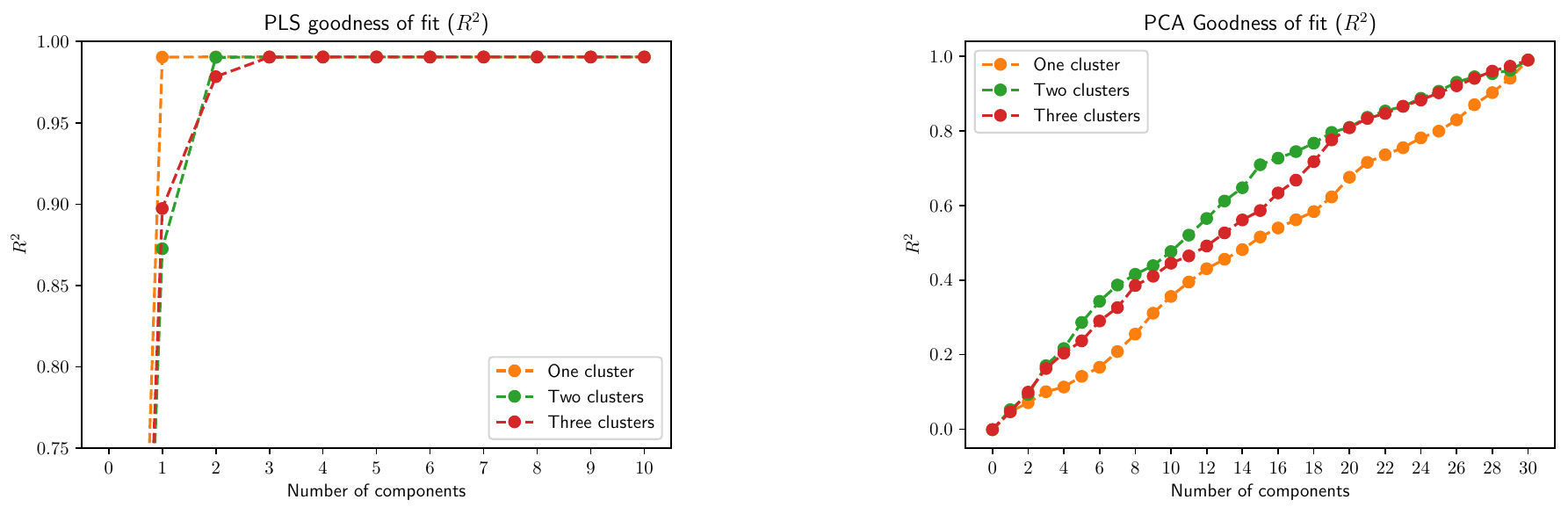}
      \caption{Accuracy of the predictions of PCA and PLS regression measured in terms of the $R^2$ score depending on the
        number of clusters in which the eigenvalues are grouped}
      \label{fig:r2_clusters}
    \end{figure}

    \begin{figure}[ht]
      \centering
      \includegraphics[width=\textwidth]{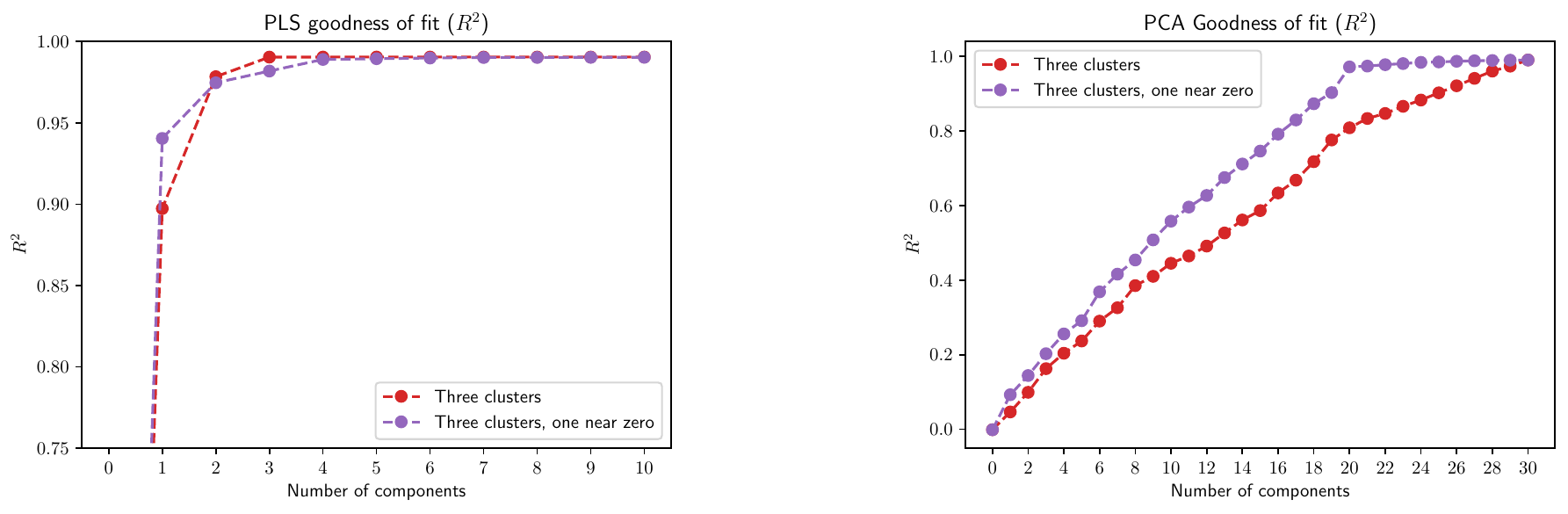}
      \caption{Accuracy of the predictions of PCA and PLS regression measured in terms of the $R^2$ score, depending on whether there is a cluster of eigenvalues near zero}
      \label{fig:r2_zero}
    \end{figure}

    \begin{figure}[ht]
      \centering
      \includegraphics[width=0.5 \textwidth]{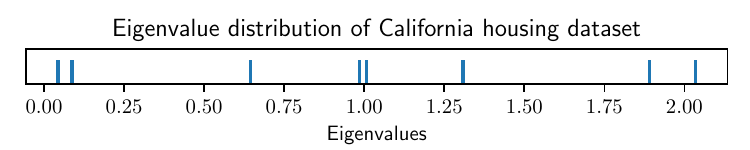}
      \caption{Eigenvalue distribution in the California Housing dataset}
      \label{fig:ch_dist}
    \end{figure}

    \begin{figure}[ht]
      \centering
      \includegraphics[width=0.9 \textwidth]{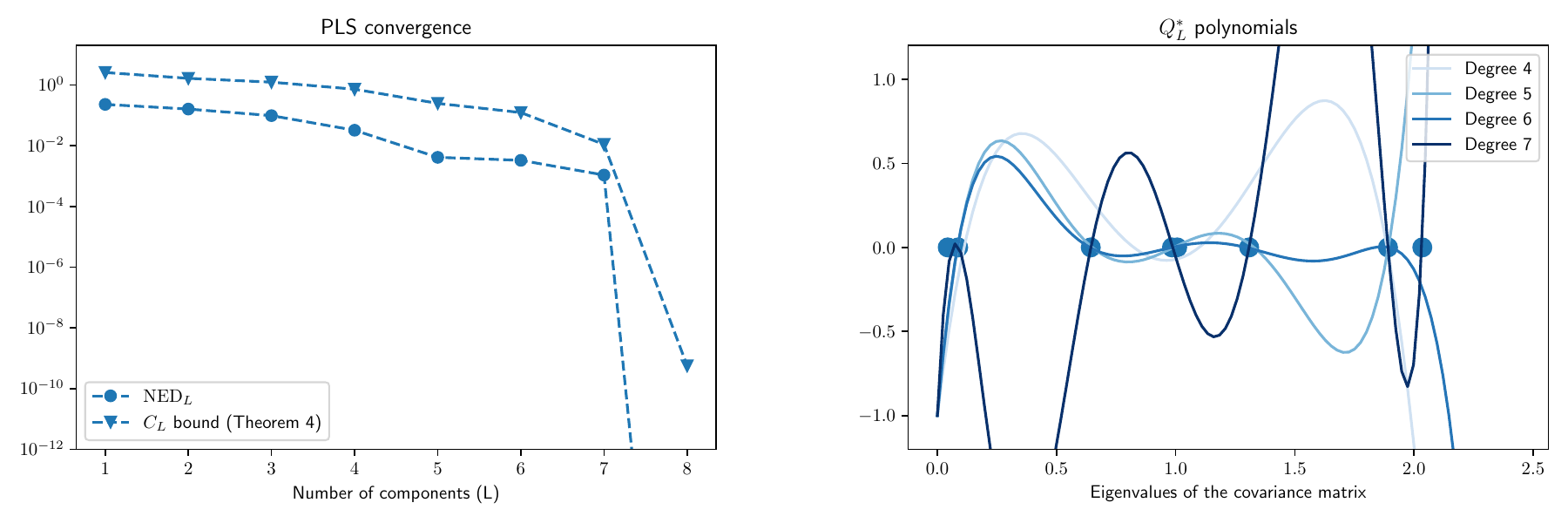}
      \caption{PLS estimator distance analysis in the California Housing dataset}
      \label{fig:ch_conv}
    \end{figure}

    \begin{figure}[ht]
      \centering
      \includegraphics[width=0.9 \textwidth]{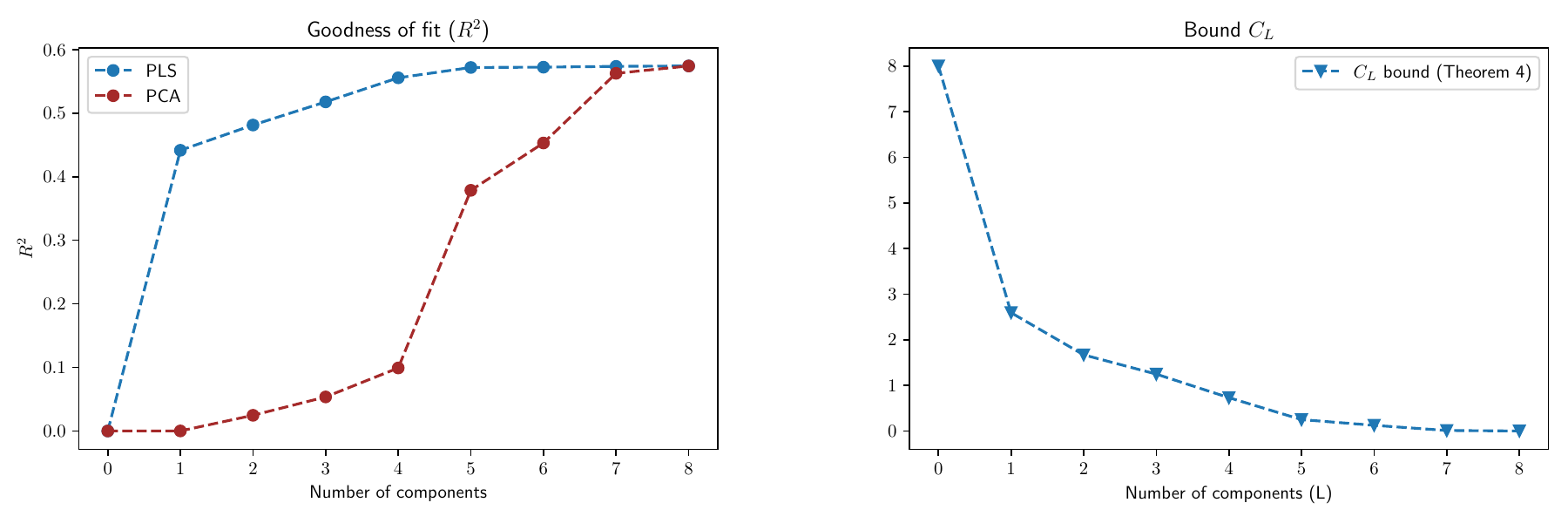}
      \caption{Accuracy of the predictions of PCA and PLS regression measured in terms of the $R^2$ score in the California Housing dataset}
      \label{fig:ch_r2}
    \end{figure}

\section{Conclusions}
  \label{sec:conclusions}
  
  In this work, the relation between ordinary least squares (OLS) and partial least squares (PLS) regression has been established by analyzing a number of different but equivalent optimization problems.
  In the context of scalar regression, the PLS components are orthogonal linear combinations of the regressor variables ($X$) that maximize the covariance with the response variable ($Y)$.
  A linear predictor is then built by taking a linear combination of a subset of size $L$  of the PLS components. 
  The coefficients in this linear combination are obtained by least squares.
  The PLS predictor can be expressed also as a linear combination of the original regression variables. 
  The estimate of the vector of regression coefficients in the original variables given by PLS with $L$ components, $\plsEstim$, is the solution of a restricted least squares problem in a Krylov subspace of order $L$ generated by $\mathrm{cov}(X, X)$, the  covariance matrix of X, and $\mathrm{cov}(X, Y)$, the vector of cross-covariances between X and Y.
  An important contribution of this work is to show that $\plsEstim$ is the vector of regression coefficients that is closest to the OLS estimator in this Krylov subspace. 
  Closeness is measured in terms of the Mahalanobis distance with the covariance matrix of the OLS estimator, which is a natural measure of differences in the space of estimators.
  Finally, leveraging the connection between optimization in Krylov subspaces and conjugate gradients, PLS regression is related to a polynomial optimization problem.
  From this reformulation, we derive an upper bound for the differences between the PLS and the OLS estimators of the vector of regression coefficients.
  This bound depends only on the eigenvalue distribution of the covariance matrix of the regressor variables. 
  In particular, PLS is expected to be most effective when the eigenvalues are not close to zero and appear tightly grouped in a few clusters.
  Furthermore, if the regressor covariance matrix has only $M$ distinct eigenvalues, convergence would be reached after $M$ steps.
  An empirical study using simulated data is carried out to analyze the effect of different types of eigenvalue distributions on the effectiveness of PLS.
  Finally, a similar studied is performed for the California Housing dataset.
  The results obtained illustrate the relevance of the theoretical analysis presented in this work and the advantages of PLS regression in real-world applications.  

\section*{Acknowledgments}
  J.R.B. acknowledges financial support from Grant CEX2019-000904-S funded by
  MCIN/AEI/ 10.13039/501100011033, and Spanish Ministry of Education and
  Innovation project PID2019-109387GB-I00.

  \noindent
  A.S. acknowledges financial support from project PID2022-139856NB-I00
  funded by MCIN/ AEI / 10.13039/501100011033 / FEDER, UE and project
  PID2019-106827GB-I00 / AEI / 10.13039/501100011033 and from the
  Autonomous Community of Madrid (ELLIS Unit Madrid).

  \appendix

\section{Proofs}
  \label{sec:proofs}

  \noindent\textbf{Proof of Proposition \ref{prop:nipals_properties}.}
  The identity for $\mathbf X$ in \eqref{eq:nipals:approx} is a direct consequence
  of substituting line 6 into line 7 of Algorithm \ref{alg:nipals:pls1}. The corresponding
  identity for $\mathbf y$ can be derived in a similar manner once one notices
  that adding a deflation step for $\mathbf y$ at the end of each iteration would
  not affect the results of NIPALS. This holds true since $\mathbf y$ is  used only 
  in the calculation of $\mathbf w_l$. After including the deflation step, 
  $\mathbf w_l = \mathbf X_{l-1}^\top \mathbf y_{l-1} / \|\mathbf X_{l-1}^\top \mathbf y_{l-1}\|$.
  However, $\mathbf X_{l-1}^\top \mathbf y_{l-1}=\mathbf X_{l-1}^\top \mathbf y$ since
  $\mathbf y_{l-1} = \mathbf y - \sum_{i=1}^{l-1}\mathbf t_i q_i$,  and
  $\mathbf X_{l-1}^\top \mathbf t_i = \mathbf 0$ as long as $i<l$.
  Regarding the decrease of the Frobenius norm, this is a consequence of the
  expressions for $\mathbf X_l$ and $\mathbf y_l$ in \eqref{eq:nipals:x_def}. We
  will prove the result for $\mathbf X_l$. From \eqref{eq:nipals:x_def}, one
  obtains
  $\mathbf X_l = \boldsymbol \Pi_l \mathbf X_{l-1}$, 
  where $\boldsymbol \Pi_l =\left(
      \mathbf I - \frac{\mathbf t_l \mathbf t_l^\top}
      {\mathbf t_l^\top \mathbf t_l}
      \right)
	  $.

  Then, to show the decrement of the norms, We need only show that $\|\mathbf
    X_l\|_F \le \|\mathbf X_{l-1}\|_F$ for $1\le l < L$:
  \begin{equation*}
    \|\mathbf X_l\|_F = \|\boldsymbol \Pi_l \mathbf X_{l-1}\|_F
    = \|\mathbf U_l \mathbf S_l \mathbf U_l^\top
    \mathbf X_{l-1} \|_F
    = \|\mathbf S_l \mathbf U_l^\top
    \mathbf X_{l-1} \|_F
    \le
    \|\mathbf U_l^\top
    \mathbf X_{l-1} \|_F
    =
    \|\mathbf X_{l-1} \|_F,
  \end{equation*}
  where $\boldsymbol \Pi_l=\mathbf U_l \mathbf S_l \mathbf U_l^\top$
  is the eigenvector decomposition of $\boldsymbol \Pi_l$. Since
  $\boldsymbol \Pi_l$ is a real symmetric matrix, $\mathbf U_l$ is
  a unitary matrix and we can apply that the Frobenius norm
  is invariant under unitary operations.
  Additionally, since $\boldsymbol \Pi_l$ is positive-definite
  and idempotent, its eigenvalues are either 0 or 1. Therefore,
  $\mathbf S_l$ has only 0s or 1s in the diagonal. As a result,
  multiplying by it can only reduce the Frobenius norm.

  The orthogonality between $\mathbf X_L$ and $\mathbf W$ can be proven showing
  that $\mathbf X_L \mathbf w_l=\textbf 0$ if $l\le L$. From
  \eqref{eq:nipals:x_def},
  \begin{equation*}
    \mathbf X_L \mathbf w_l =
    \left(
    \mathbf I
    -\frac{\mathbf t_L \mathbf t_L^\top}
    {\mathbf t_L^\top \mathbf t_L}
    \right)
    \dots
    \left(
    \mathbf I
    -\frac{\mathbf t_l \mathbf t_l^\top}
    {\mathbf t_l^\top \mathbf t_l}
    \right)
    \mathbf X_{l-1}
    \mathbf w_l
    =
    \left(
    \mathbf I
    -\frac{\mathbf t_L \mathbf t_L^\top}
    {\mathbf t_L^\top \mathbf t_L}
    \right)
    \dots
    \left(
    \mathbf I
    -\frac{\mathbf t_l \mathbf t_l^\top}
    {\mathbf t_l^\top \mathbf t_l}
    \right)
    \mathbf t_l
    = \textbf 0
  \end{equation*}

  Regarding the expressions for the loadings, both identities can be proven in
  the same way. We will prove the identity for $\mathbf P$, the $X$ loadings,
  showing the equality for each column of both sides of the equation. This
  equality is, in turn, a consequence of the expression for $\mathbf X$ in
  \eqref{eq:nipals:x_def}.
  \begin{equation*}
    \label{eq:nipals:loadings:dem}
    \begin{alignedat}{1}
      \mathbf X^\top \mathbf t_l \|\mathbf t_l\|^{-2}
       & =
      (\mathbf T_{l-1} \mathbf P_{l-1})^\top
      \mathbf t_l \|\mathbf t_l\|^{-2}
      +
      \mathbf X_{l-1}^\top
      \mathbf t_l \|\mathbf t_l\|^{-2}
      =
      \mathbf P_{l-1}^\top \mathbf T_{l-1}^\top
      \mathbf t_l \|\mathbf t_l\|^{-2}
      +
      \mathbf X_{l-1}^\top
      \mathbf t_l \|\mathbf t_l\|^{-2}
      = \mathbf p_l,
    \end{alignedat}
  \end{equation*}
  where $\mathbf T_{l-1}^\top \mathbf t_l = \mathbf 0$ because the extracted
  components are orthogonal.
  \hfill $\square$

  \noindent\textbf{Proof of Proposition \ref{prop:rotation}.}
  From Proposition \ref{prop:nipals_properties},
  $\mathbf X_L \mathbf W_L=\mathbf 0$. Applying this to the
  decomposition for $\mathbf X$ in \eqref{eq:nipals:approx}, we obtain:
  \begin{equation*}
    \pushQED{\qed}
    \label{eq:rotations:dem_1}
    \mathbf X \mathbf R_L
    =
    (\mathbf T_L \mathbf P_L^\top + \mathbf X_L)
    (\mathbf W_L (\mathbf P_L^\top \mathbf W_L)^{-1})
    =
    \mathbf T_L \mathbf P_L^\top \mathbf W_L (\mathbf P_L^\top \mathbf W_L)^{-1}
    +
    \mathbf X_L \mathbf W_L (\mathbf P_L^\top \mathbf W_L)^{-1}
    = \mathbf T_L.
    \qedhere
    \popQED
  \end{equation*}

  \noindent\textbf{Proof of Theorem \ref{thm:error_polynomial}.}
  Since $\xcov=\mathbf X^\top \mathbf X$ is a real, symmetric matrix, it is possible to
  find a sequence of non-negative eigenvalues $\{\lambda_1, \dots, \lambda_D\}$
  and orthonormal eigenvectors: $\{\mathbf u_1, \dots, \mathbf u_D\}$ such that
  $\xcov = \sum_{d=1}^D \lambda_d \mathbf u_d \mathbf u_d^\top$. This eigenvalue
  decomposition has three properties.
  First, the eigenvectors span the entire $\mathbb{R}^D$ space.
  Therefore, $D$ scalars $\{\xi_d\}_{d=1}^D$ can be found such that
  $\hat{\boldsymbol \beta}_{\mathrm{OLS}} = \sum_{d=1}^D \xi_d \mathbf u_d$.
  Second, the norm of a vector can be calculated as
  $\|\mathbf z\|_\xcov^2 = \mathbf z^\top \xcov \mathbf z
    = \sum_{d=1}^D \lambda_d (\mathbf u_d^\top \mathbf z)^2$
  Third, for any polynomial $P$, it holds that
  $P(\xcov)\mathbf u_d = P(\lambda_d) \mathbf u_d$, for
  $d=1,\dots, D$.
  Using these properties we can now find an expression to calculate
  $\Big\|\hat{\boldsymbol \beta}_L - \hat{\boldsymbol \beta}_{\mathrm{OLS}}\Big\|_\xcov^2$
  in terms of the polynomials $P_L^*$.
  \begin{equation}
    \label{eq:norm_estimate}
    \Big\|\hat{\boldsymbol \beta}_L - \hat{\boldsymbol \beta}_{\mathrm{OLS}}\Big\|_\xcov^2
    =
    \Big\|(P^*_{L-1}(\xcov)\xcov - \mathbf I)\hat{\boldsymbol\beta}_{\mathrm{OLS}}\Big\|_\xcov^2
    =
    \Big\|\sum_{d=1}^D(P^*_{L-1}(\xcov)\xcov - \mathbf I)\xi_d \mathbf u_d\Big\|_\xcov^2
    =
    \sum_{d=1}^D Q_L^*(\lambda_d)^2 \lambda_d \xi_d^2,
  \end{equation}
  where $Q_{L}^*(t)=tP_{L-1}^*(t)-1$, a polynomial of degree lower or equal to $L$ that fulfills  $Q_L^*(0)=-1$.
  Additionally, Corollary \ref{cor:polynomial_equiv} shows that $P^*_{L-1}$ is
  the polynomial that minimizes the RHS of \eqref{eq:norm_estimate} over all
  polynomials of degree lower or equal to $L-1$. 
  Therefore, $Q^*_{L}$ minimizes that same quantity
  over all the polynomials $Q_L$ of degree at most $L$ such that $Q_L(0)=-1$. That is to say, over $\Omega_L$. \hfill $\square$

  \noindent\textbf{Proof of Theorem \ref{thm:bound}.}
  All polynomials in $\Omega_L$ can be expressed as
  $R_L(t) = -1 + a_1 t + \dots + a_L t^L$ for some coefficients $a_1,\dots, a_L$.
  Therefore, as a function of the coefficients of the polynomials, the bound can be expressed as
  $h_L(a_1, \dots, a_L) =  \sum_{d=1}^D \left(-1+a_1 \lambda_d+ \cdots + a_L\lambda_d^L\right)^2$.
  To minimize this function, we calculate its gradient
  and determine the coefficients for which it is zero:
  \begin{equation}
    \label{eq:h_gradient}
    \frac{\partial h_L}{\partial a_l} = 2\sum_{d=1}^D(-1+a_1 \lambda_d + \dots + a_L \lambda_d^L)\lambda_d^l
    =
    -2\sum_{d=1}^D\lambda_d^l + 2a_1 \sum_{d=1}^D \lambda_d^{l+1} + \dots + 2a_L \sum_{d=1}^D \lambda_d^{l+L}
    =
    0,
    \qquad
    l=1,\dots, L.
  \end{equation}

  By rewritting these equations in terms of the sample raw moments of the eigenvalues,
  we obtain $
    a_1 \mu_{l+1}' + \dots + a_L \mu_{l+l}' = \mu_l',
	$
	for $l=1,\dots, L$.
  These equations can be expressed as the system $\mathbf H_L \mathbf a_L =
    \mathbf c_L$. Therefore, the coefficients that minimize $h_L$ are $\mathbf
    a_L^* = \mathbf H_L^{-1}\mathbf c_L$. Additionally, we express $h_L$ as
  \begin{equation}
    \label{eq:minimize_w_2}
    h_L(a_1, \dots, a_L) =
    (-1, \mathbf a_L)
    \begin{pmatrix}
      1           & \dots  & 1           \\
      \vdots      & \ddots & \vdots      \\
      \lambda_1^L & \dots  & \lambda_D^L
    \end{pmatrix}
    \begin{pmatrix}
      1      & \dots  & \lambda_1^L \\
      \vdots & \ddots & \vdots      \\
      1      & \dots  & \lambda_D^L
    \end{pmatrix}
    \begin{pmatrix}
      -1 \\
      \mathbf a_L
    \end{pmatrix}
    =
    (-1, \mathbf a_L)
    \begin{pmatrix}
      D            & D \mathbf c_L^\top \\
      D\mathbf c_L & D\mathbf H_L       \\
    \end{pmatrix}
    \begin{pmatrix}
      -1 \\
      \mathbf a_L
    \end{pmatrix}.
  \end{equation}

  Substituting the expression for $\mathbf a_L^*$ in the previous formula shows that
  $
    h_L(\mathbf a_L^*) = (-1, \mathbf H_L^{-1}\mathbf c_L)=
    D(1-\mathbf c_L^\top \mathbf H_L^{-1}\mathbf c_L  ).
	$
  Finally, note that the obtained coefficients $\mathbf a_L^*$ define the polynomial
  $R_L^*(t) = -1 + a_1^* t + \dots + a_L^* t^L$, which minimizes $H_2$.
  \hfill $\square$

  \nocite{*}
  \bibliography{PLSOLSRelation}%

\end{document}